\journal{arXiv}
\DeclareMathAlphabet{\mathpzc}{OT1}{pzc}{m}{it}
\DeclareMathAlphabet{\mathpzc}{OT1}{pzc}{m}{it}
\newtheorem{theorem}{Theorem}
\newtheorem{proposition}[theorem]{Proposition}
\newdefinition{definition}{Definition}
\newdefinition{hypothesis}{Hypothesis}
\newdefinition{problem}{Problem}
\newdefinition{remark}{Remark}
\newproof{proof}{Proof}
\newdefinition{example}{Example}
\def\defining{\overset{\mathbf{def}}=}
\def\S{\mathcal{S}}
\def\z{\boldsymbol{z} }
\def\z{\mathbb{Z}}
\def\n{\mathbb{N}}
\def\defining{\overset{\mathbf{def}}=}
\def\S{\mathcal{S}}
\def\r{\mathbb{R}}  
\DeclareMathAlphabet{\mathpzc}{OT1}{pzc}{m}{it}
\begin{document}

\begin{frontmatter}

\title{The RaPID-$ \Omega $ system: \\
Room and Proctor Intelligent Decider \\
for large scale tests programming}
\tnotetext[mytitlenote]{This material is based upon work supported by project HERMES 41491 from Universidad Nacional de Colombia,
Sede Medell\'in.}

\author[mymainaddress]{Fernando A Morales} 
\cortext[mycorrespondingauthor]{Corresponding Author}
\ead{famoralesj@unal.edu.co}




\address[mymainaddress]{Escuela de Matem\'aticas
Universidad Nacional de Colombia, Sede Medell\'in \\
Carrera 65 \# 59A--110, Bloque 43, of 106,
Medell\'in - Colombia}


\begin{abstract}
We present the mathematical modeling for the problem of choosing rooms and proctoring crews for massive tests, together with its implementation as the open-box system RaPID\Lightning$ \Omega $. The mathematical model is a binary integer programming problem: a combination of the 0-1 Knapsack problem and the job-assignment problem. The model makes decisions according the following criteria in order of priority: minimization of labor-hours, maximization of equity in the distribution of duties and maximization of the proctoring quality. The software is a digital solution for the aforementioned problem, which is a common need in educational institutions offering large, coordinated, lower-division courses.  The system can be downloaded from  \cite{RaPID}: 
\begin{center}
	\url{https://sites.google.com/a/unal.edu.co/fernando-a-morales-j/home/research/software}
\end{center}  
\end{abstract}

\begin{keyword}
Open-box software, Python Program Documentation, Binary Integer Programming
\MSC[2010] 90C10 \sep 90B80 \sep 68N15 
\end{keyword}

\end{frontmatter}



%
%
%
%
%
%
%
%
%
\section{Introduction}
%
%
In this work we present the mathematical modeling and documentation of the open-box system RaPID\Lightning$ \Omega $, designed to optimize the rooms choice and the proctor scheduling for the logistics of massive tests. This problem is frequent in educational institutions having large coordinated courses (specially for the lower division ones) with simultaneous common tests. Here, we propose a novel mathematical model for this problem and implement a digital solution. Its structure is highly modular, which makes it easy for a programmer to make modifications, it is flexible due to the defined data sets and it is efficient. The software was constructed based on the needs of a specific case: \textit{Escuela de Matem\'aticas} (School of Mathematics) at \textit{Universidad Nacional de Colombia, Sede Medell\'in} (National University of Colombia at Medell\'in). Hence, the simple model examples we present are based on our study case.   

The School of Mathematics is part of the College of Science within the National University of Colombia at Medell\'in, it teaches two types of courses: specialization (advanced undergraduate and graduate courses in mathematics) and service courses (lower division) for the whole University. The latter are: \textit{Differential Calculus} (DC), \textit{Integral Calculus} (IC), \textit{Vector Calculus} (VC), \textit{Differential Equations} (ODE), \textit{Vector \& Analytic Geometry} (VAG), \textit{Linear Algebra} (LA), \textit{Numerical Methods} (NM), \textit{Discrete Mathematics} (DM), \textit{Applied Mathematics} (AM) and \textit{Basic Mathematics} (BM, college algebra). The total demand of these courses amounts to an average of 7200 enrollment registrations per semester. The last three courses, DM, AM, BM, do not test their students in a coordinated fashion but independently (i.e., each lecturer designs his/her own evaluation method); consequently they will not be subject to this analysis. Given that most of the students attending the National University of Colombia at Medell\'in pursue degrees in Engineering, the courses DC, IC, VC, ODE, VAG, LA and MN are massive and they pose significant logistic challenges for booking their respective evaluations; see Table \ref{Tb Historical Enrollment Table} below.
\begin{table}[h!]
	\caption{Historical Enrollment Table}\label{Tb Historical Enrollment Table}
	\def\arraystretch{1.4}
	\normalsize{
		\begin{center}
			\rowcolors{2}{gray!25}{white}
			\begin{tabular}{ c c c c c c c c c }
				\hline
				\rowcolor{gray!50}
				Semester
				& DC
				& IC
				& VC
				& VAG
				& LA
				& ODE
				& NM 
				& Total\\ 
				\hline
				2010--1 &	1631 &	782	  & 381	& 1089	 & 983	& 668 &	 142 & 5676 \\
				2010--2 &	1299 &	1150 &	427	& 1003  & 1007 & 562 &	 261 &	5709 \\
				2011--1 &	1271 &	1136 &	512	& 1078	& 900	& 663 &	 269	& 5829 \\
				2011--2 &	951	  &    850 &  513 &	652	   & 812  &	1170 &	289 &	5237 \\
				2012--1 &	1619 &	1096 &	559	& 1110	& 1116 & 752 &	366	 & 6618 \\
				2012--2 &	1486 &	1190 &	601	& 1076	& 1144	& 825 & 356 &	6678 \\
				2013--1 &	1476 &	1044 &	604	& 1231	& 1037 & 902 &	319 & 6613 \\
				2013--2 &	1446 &	1212 &	549	& 1187	& 1103 & 786 &	326 & 6609 \\
				2014--1 &	1460 &	1184 &	676	& 1192	& 1000	& 890 & 295 &	6697 \\
				2014--2 &	1399 &	1126 &	564	& 1198	& 1012	& 695 &	234	& 6228 \\
				2015--1 &	1097 &	925	  & 565	& 1076	 & 793	&  601	& 201	& 5258 \\
				2015--2 &	1797 &	1214 &	605	& 1314  & 1099 & 808 & 274 & 7111 \\
				2016--1 &	1675 &	1323 &	582	& 1549	& 1017 & 950 & 263 & 7359 \\
				2016--2 &	1569 &	1296 &	594	& 1355	& 1009 & 1019 &	 284 & 7126 \\
				2017--1 &	1513 &	1315 &	515	& 1088	& 798	& 736 & 134	& 6099 \\
				\hline
				Mean & 1445.9	& 1122.9 & 549.8 & 1146.5 & 988.7 & 801.8 & 267.5 & 6323.1 \\
				\hline
			\end{tabular}
		\end{center}
	}
\end{table}
On a typical semester these courses are divided in sections (between 8 and 22, depending on the enrollment) of sizes ranging from 80 to 140 (because of classroom seat capacities). The evaluation consists in three exams which the students take simultaneously, the personnel in charge of proctoring duties consists of approximately 45 lecturers among tenured and adjunct faculty, as well as 70 teaching assistants among graduate and undergraduate students. Moreover the Teaching Assistants and Adjunct Faculty are not full-time employees ergo, they introduce significant time constraints in the task assignment, due to their schedule. Typically, each of the coordinated courses takes three tests during the semester, therefore three rounds of tests need to be scheduled each semester. Currently, each round's selection of rooms and proctoring duties assignment is decided with the RaPID\Lightning$ \Omega $ system.

In contrast with black-box commercial software, this open-box tool is aimed to be easily used and/or modified to the needs of other programming scenarios (most likely other educational institutions with different but similar testing procedures). It shares the spirit (and inspiration) of other open-box systems such as \cite{CarstensenElasticity} and \cite{Bahriawati05threematlab}. We tackle the problem in three steps. First, we model the rooms' optimal choice problem with the Knapsack Problem (see \cite{kellerer2005knapsack} and \cite{martello1990knapsack}), here we minimize the total number of necessary proctors. Second, we model the problem of choosing proctoring crews as the Job Assignment Problem (see \cite{Conforti}), this is done maximizing the equity in the proctoring hours among the personnel. Third, the chose crew is organized in order to maximize the proctoring quality conditions of each room, according to the experience record of each proctor. This is done with a Greedy Algorithm (see \cite{AlgorithmsCormen}), which accommodates the proctors in that order of priority. Each of the steps previously mentioned has a corresponding module inside the system RaPID\Lightning$ \Omega $, these are: \textbf{Room\_Decision.py},  \textbf{Personnel\_Decision.py} and \textbf{Crew\_Organization.py}.  The system is implemented in Python 3.4, it uses libraries such as pandas (Python Data Analysis Library) and SciPy. RaPID\Lightning$ \Omega $ runs from command line and it can be freely downloaded from \cite{RaPID}:
\begin{center}
	\url{https://sites.google.com/a/unal.edu.co/fernando-a-morales-j/home/research/software}
\end{center}  
Three versions are available: a Windows, a Linux-Mac and a Collab version.

The rest of the paper is organized as follows: Section \ref{Sec Room Decision} exposes the modeling and algorithms for the first module of the system concerned about optimal room choice and number of students in each room. Section \ref{Sec Personnel Decision} presents the modeling algorithms of the second step regarding the choice of proctors among available personnel. Section \ref{Sec Crew Organization} presents the algorithm to decide the optimal position to proctor a test, starting from a previously scheduled crew. In Section \ref{Sec Input Datasets} the input datasets are exposed, together with its meaning and its format, in Section \ref{Sec Output Files} the output files are presented, two of these files are final while two are intermediate. Section \ref{Sec Execution and Problems} gives brief directions of execution \& problems and Section \ref{Sec  Conclusions} summarizes the conclusions. The sections  \ref{Sec Room Decision}, \ref{Sec Personnel Decision} and \ref{Sec Crew Organization} are of mathematical interest and for a developer who wants to make adjustments to the current system. On the other hand, a mere user needs to understand only sections \ref{Sec Input Datasets}, \ref{Sec Output Files} \ref{Sec Execution and Problems}. We close this section presenting a Flux Diagram of the RaPIPD\Lightning$ \Omega $-sysem, see Figure \ref{Fig Flux Diagram}.
\begin{figure}[h]
	\centering
	\begin{tikzpicture}
	[scale=.6,auto=left,every node/.style={}]
	\node (n1) at (1, -1)  {$\begin{Bmatrix}
		\small\text{Input DB}\\
		\texttt{Available\_Rooms.xls}
		\end{Bmatrix}$}; 
	\node (n2) at (8, -1)  {$\begin{Bmatrix}
		\small\text{Input DB}\\
		\texttt{Room\_Data.xls}
		\end{Bmatrix}$}; 
	\node (n4) at (1, -4)  {$\begin{bmatrix}
		\texttt{Room\_Decision.py}\\
		\small\text{Proctor hours minimization} \\
		\small\text{Dynamic Programming}
		\end{bmatrix}$};
	\node (n5) at (7.5, -7)  {$\begin{Bmatrix}
		\small\text{Output and imput DB} \\
		\texttt{Scheduled\_Rooms.xls}
		\end{Bmatrix}$}; 
	\node (n6) at (1,-10)  {$\begin{bmatrix}
		\texttt{Personnel\_Decision.py}\\
		\small\text{Equity maximization}\\
		\small\text{Job Assignment Problem}
		\end{bmatrix}$};
	\node (n8) at (-7, -13)  {$\begin{pmatrix}
		\footnotesize\text{Output DB} \\
		\small\texttt{New\_Proctor\_Log.xls}\\
		(\small\text{updated version})
		\end{pmatrix}$}; 
	\node (n9) at (7.5, -13)  {$\begin{Bmatrix}
		\small\text{Output and Input DB} \\
		\texttt{Scheduled\_Crew.xls}
		\end{Bmatrix}$}; 
	\node (n10) at (1,-16)  {$\begin{bmatrix}
		\texttt{Crew\_Organization.py}\\
		\small\text{Optimizaing Proctoring Quality} \\
		\small\text{Greedy Algorithm}
		\end{bmatrix}$};
	\node (n11) at (-7, -1)  {$\begin{Bmatrix}
		\small\text{Input DB}\\
		\texttt{Personnel\_Time.xls}
		\end{Bmatrix}$}; 
	\node (n12) at (-7, -7.50)  {$\begin{Bmatrix}
		\small\text{Input DB}\\
		\texttt{Proctor\_Log.xls}
		\end{Bmatrix}$}; 
	\node (n13) at (1, -19)  {$\begin{pmatrix}
		\small\text{Output DB} \\
		\texttt{Proposed\_Programming.xls}
		\end{pmatrix}$}; 
	
	\node (n16) at (-7, -4.5)  {$\begin{Bmatrix}
		\small\text{Input DB}\\
		\texttt{Professors.xls}
		\end{Bmatrix}$}; 
	
	\foreach \from/\to in {n1/n4, n2/n4, 
		n4/n5, 
		n5/n6, 
		n6/n8, n6/n9, n11/n6, n12/n6, n16/n6,
		n9/n10,
		n10/n13,
		n5/n10}
	\draw[->] (\from) -- (\to);
	
	\foreach \from/\to in {n4/n6, n6/n10}
	\draw[very thick, ->>] (\from) -- (\to);
	\end{tikzpicture}
	\caption{Flux diagram RaPID\Lightning$\Omega$. 
		We use curly brackets to indicate input data bases, square brackets for the algorithms and in round brackets the data bases to be saved for futre and/or direct use.   
	}
	\label{Fig Flux Diagram}
\end{figure}
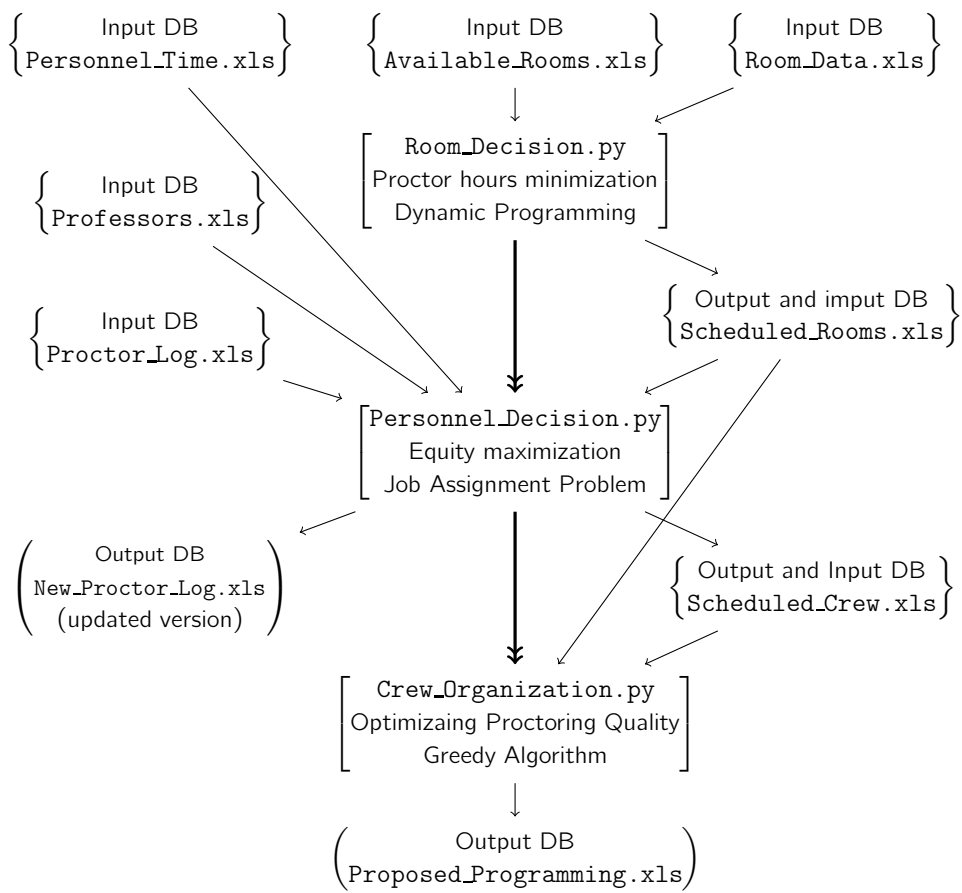
%
%
%
%
%
\section{The Room Decision Problem: module Room\_Decision.py}\label{Sec Room Decision}
%
%
%
%
The first module we are to attack is:
choosing for each test, a set of rooms that minimizes the number of needed proctors. This process is done independently for each test, because there is never time conflict between scheduled tests.
therefore the rooms are not exchangeable items; it consists in two steps. First, the choice of rooms, this is modeled with a 0-1 knapsack problem and solved with dynamic programming algorithm, which is implemented in algorithm \ref{Alg Rooms Decision Algorithm}. Second, the optimal use of the slack between students and available seats (which usually is nonzero), which is solved with a greedy algorithm, which is implemented in algorithm \ref{Alg Slack Distribution Algorithm}. 
%
%
%
%
\subsection{Modeling Choice of Rooms}\label{Sec Rooms Choice}
%
%
In this subsection we explain how we model the rooms' selection. Before starting, two values need to be introduced
\begin{definition}\label{Def Capacity and Weight}
	Let $ \big(c_{i}: 1 \leq i \leq N \big) $ be the list of capacities (quantity of seats) of the available rooms for a given test, define
	\begin{equation}\label{Eqn Capacity and Weight}
	\begin{split}
	& r \in \n \text{ the student-proctor rate},\\
	& w_{i} \defining \big\lceil \frac{c_{i}}{r} \big\rceil \text{ the weight/cost of each room}.
	\end{split}
	\end{equation}
	Here, it is understood that the label $ i = 1, 2, \ldots, N $ stands for each available room and that the ceiling function $ x\mapsto \lceil x \rceil\defining \min \{n \in \z: n\geq x\} $, assigns to any real number $ x \in \r $ the minimum integer greater or equal than $ x $. The student-proctor rate $ r $, is the number of students that an individual must proctor (the system has \text{54 as default value}). 
\end{definition}
Therefore, the problem of choosing rooms is modeled by the 0-1 minimization Knapsack problem
\begin{problem}\label{Pbl Room Decision Problem}
	Let $ \big(c_{i}: 1 \leq i \leq N \big) $ be the list of capacities of the available rooms for a given test and let $ D $ (the demand) be the number of students taking the test. Let $ (w_{i}: 1\leq i \leq N) $ be the list of weights for each room as introduced in Definition \ref{Def Capacity and Weight} then, the problem of minimizing proctors is given by
	\begin{subequations}\label{Eqn Room Decision Problem}
		\begin{equation}\label{Eqn Room Decision Problem Objective Function}
		\min \sum\limits_{i \, = \, 1}^{N} w_{i} x_{i} .
		\end{equation}
		Subject to
		\begin{align}\label{Eqn Room Decision Problem Capacity Constraint}
		& \sum\limits_{i \, = \, 1}^{N} c_{i} x_{i}  \geq D , &
		& x_{i} \in \{0,1\}, &
		& \text{for all } i = 1, 2, \ldots, N .
		\end{align}
		%
	\end{subequations}
	Here, for each $ i = 1, \ldots, N $, the binary variable $ x_{i} $ indicates whether the room $ i $ is chosen ($ x_{i} = 1 $) or not ($ x_{i} = 0 $).
\end{problem}
%
%
Observe that the solution of \textit{Problem} \ref{Pbl Room Decision Problem} above can be found using the solution of the following Knapsack Problem
\begin{problem}\label{Pblm Equivalent Knapsack Problem}
	%
	%
	\begin{subequations}\label{Eqn Equivalent Knapsack Problem}
		\begin{equation}\label{Eqn Equivalent Knapsack Problem Objective Function}
		\max \sum\limits_{i\, \in \, [N] } p_{i}y_{i} ,
		\end{equation}
		subject to
		\begin{equation}\label{Eqn Equivalent Knapsack Problem Capacity Constraint}
		\sum\limits_{i\, \in \, [N] } c_{i}y_{i} \leq \sum\limits_{i\, \in \, [N]}c_{i} -  D,
		\end{equation}
		\begin{align}\label{Eqn Equivalent Knapsack Problem Choice Constraint}
		& y_{i} \in \{0,1\}, &
		& \text{for all } i \,\in \,[N] .
		\end{align}	
	\end{subequations}
	%
	%
\end{problem}
\begin{proposition}\label{Thm Equivalence Knapsack Integer}
	Let $ \boldsymbol{y} \defining \big(y_{i}: i\in [N] \big) \in \{0, 1\}^{N} $ be a solution to \textit{Problem} \ref{Pblm Equivalent Knapsack Problem} and define $ x_{i} \defining 1 - y_{i} $ for all $ i\in [N] $ then, the vector $ \boldsymbol{x} = \big(x_{i}: i\in [N] \big)  \in \{0, 1\}^{N} $ is a solution to \textit{Problem} \ref{Pbl Room Decision Problem}.
\end{proposition}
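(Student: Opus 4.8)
The plan is to recognize the componentwise complementation map $\boldsymbol{y} \mapsto \boldsymbol{x} = \boldsymbol{1} - \boldsymbol{y}$ on $\{0,1\}^N$ as the mechanism that converts Problem \ref{Pblm Equivalent Knapsack Problem} into Problem \ref{Pbl Room Decision Problem}: I would show that it carries feasible points to feasible points and flips the maximization into the minimization up to an additive constant, after which optimality transfers with no further work.

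First I would check the feasibility correspondence. Substituting $x_i = 1 - y_i$ into the capacity constraint of Problem \ref{Pbl Room Decision Problem} gives $\sum_i c_i x_i = \sum_i c_i - \sum_i c_i y_i$, so $\sum_i c_i x_i \geq D$ holds exactly when $\sum_i c_i y_i \leq \sum_i c_i - D$, which is the constraint \eqref{Eqn Equivalent Knapsack Problem Capacity Constraint}. Since complementation is an involution on $\{0,1\}^N$, this makes it a bijection between the two feasible sets, gaining and losing no solutions.

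Next I would relate the objectives. Taking $p_i = w_i$, the same substitution yields $\sum_i w_i x_i = \big(\sum_i w_i\big) - \sum_i p_i y_i$, and since $\sum_i w_i$ does not depend on the variables, minimizing the left-hand side over feasible $\boldsymbol{x}$ is the same as maximizing $\sum_i p_i y_i$ over feasible $\boldsymbol{y}$. To conclude, I take $\boldsymbol{y}$ optimal for Problem \ref{Pblm Equivalent Knapsack Problem}, set $\boldsymbol{x} = \boldsymbol{1} - \boldsymbol{y}$ (feasible by the first step), and for any feasible competitor $\boldsymbol{x}'$ let $\boldsymbol{y}' = \boldsymbol{1} - \boldsymbol{x}'$; optimality of $\boldsymbol{y}$ gives $\sum_i p_i y_i' \leq \sum_i p_i y_i$, and feeding this through the objective identity yields $\sum_i w_i x_i \leq \sum_i w_i x_i'$, so $\boldsymbol{x}$ attains the minimum. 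The whole argument is bookkeeping once the complementation map is set up; the only step demanding genuine care is verifying that this affine change of variables sends the $\geq$ capacity constraint precisely onto the $\leq$ knapsack constraint, since that is exactly what guarantees the feasible-set bijection on which everything else rests.
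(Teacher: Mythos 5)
Your proof is correct and follows exactly the route the paper intends: the paper's own ``proof'' simply invokes the classic complementary-variables substitution $x_i = 1 - y_i$ and refers to Section 13.3.3 of the Knapsack Problems monograph, whereas you spell out the details (feasibility bijection, objective shift by the constant $\sum_i w_i$ under the identification $p_i = w_i$, and the transfer of optimality). No discrepancy in substance --- you have merely written out what the paper outsources to the citation.
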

\begin{proof}
	The proof uses the well-known classic transformation of complementary binary variables, $ x_{i} = 1 - \xi_{i} \in \{0, 1\} $ for all $ i\in [N] $, to relate the problems \ref{Pbl Room Decision Problem} and \ref{Pblm Equivalent Knapsack Problem} (see \textit{Section 13.3.3} in \cite{kellerer2005knapsack} for details).
\end{proof}
There are several ways for solving the 0-1 Knapsack problem \ref{Pblm Equivalent Knapsack Problem}. In the RaPID\Lightning$ \Omega $ system the problem is solved using the technique of dynamic programming (see \textit{Section 11.3} in \cite{Bertsimas} or \textit{Section 2.3} in \cite{kellerer2005knapsack} for details), whose computational complexity is given by
\begin{theorem}[Dynamic Programming Complexity]\label{Thm Dynamic Programming Complexity}
	The 0-1 Knapsack problem can be solved in time $\mathcal{O}(n^{2}p_{\max}) $.
\end{theorem}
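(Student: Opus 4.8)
The plan is to prove the bound by exhibiting the \emph{profit-indexed} dynamic programming algorithm, which is the variant whose running time is naturally controlled by $p_{\max}$. The point worth stressing at the outset is that the familiar weight-indexed table (whose second dimension is the knapsack capacity) would only yield a bound of the form $\mathcal{O}(n\cdot\text{capacity})$, which is \emph{not} of the claimed shape; the bound $\mathcal{O}(n^2 p_{\max})$ is obtained by indexing the table by attainable profit instead, and then bounding the range of that index.

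First I would set up the subproblems for a generic 0-1 knapsack with items $i = 1,\dots,n$, profits $p_i$, weights $w_i$, and capacity $c$. For each prefix length $j \in \{0,1,\dots,n\}$ and each target profit $q$, define $W_j(q)$ to be the minimum total weight over all subsets of the first $j$ items whose total profit is at least $q$, with the convention $W_j(q) = +\infty$ when no such subset exists. The next step is to bound the range of the profit index: since each item contributes profit at most $p_{\max}$ and there are $n$ items, every attainable total profit lies in $\{0,1,\dots,P\}$ with $P = \sum_{i=1}^{n} p_i \le n\,p_{\max}$, so $q$ ranges over only $\mathcal{O}(n\,p_{\max})$ values.

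I would then record the recurrence. The base case is $W_0(0) = 0$ and $W_0(q) = +\infty$ for $q \ge 1$, and for $j \ge 1$ the transition is
$$W_j(q) = \min\bigl\{\, W_{j-1}(q),\ w_j + W_{j-1}\bigl(\max(0,\, q - p_j)\bigr) \,\bigr\},$$
which expresses the decision to exclude or include item $j$ and is evaluated in $\mathcal{O}(1)$ from entries already computed. The table has $(n+1)\times(P+1) = \mathcal{O}\!\left(n \cdot n\,p_{\max}\right) = \mathcal{O}(n^2 p_{\max})$ entries, each filled in constant time, so filling the whole table costs $\mathcal{O}(n^2 p_{\max})$. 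The optimal knapsack value is recovered as the largest profit $q$ with $W_n(q) \le c$; scanning the $\mathcal{O}(n\,p_{\max})$ profit values and backtracking through the table to recover the chosen subset are both dominated by the table-filling cost, so the overall time is $\mathcal{O}(n^2 p_{\max})$.

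The only genuinely non-routine point, and hence the step I expect to be the main obstacle, is the choice of indexing: one must recognize that indexing by profit (rather than by weight) is what makes the running time polynomial in $p_{\max}$, and then invoke the elementary but essential estimate $P \le n\,p_{\max}$ to collapse the profit range into the factor $n\,p_{\max}$. Once the profit-indexed formulation is in place the counting is immediate, and the correctness of the recurrence is a standard exclude/include argument that I would state but not belabor.
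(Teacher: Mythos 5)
Your proposal is correct and is essentially the argument the paper defers to: the paper's ``proof'' is just a citation to Theorem 11.1 of Bertsimas and Tsitsiklis, whose proof is exactly the profit-indexed dynamic program you describe, with the table of minimum weights $W_j(q)$ over $q \le \sum_i p_i \le n\,p_{\max}$ giving the $\mathcal{O}(n^2 p_{\max})$ bound. You have simply written out in full the standard argument that the paper leaves to the reference.
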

\begin{proof}
	See Theorem 11.1 in \cite{Bertsimas}.
\end{proof}
Finally, the implementation is given by the algorithm \ref{Alg Rooms Decision Algorithm}.
\begin{algorithm} 
	\caption{Room Decision Algorithm, decides the choice of rooms and the slack distribution once an optimal solution to Problem \ref{Pbl Room Decision Problem} is found.}
	\label{Alg Rooms Decision Algorithm}
	\begin{algorithmic}[1]
		\Procedure{Room Decision}{Available\_Rooms.xls file, Student-Proctor Rate: $ r $.\newline
			\textbf{User Decision}: Student-Proctor rate $ r $
		}
		\State \textbf{create} the Excel book Scheduled\_Rooms.xls
		\For{column of \textbf{Available\_Rooms.xls} }
		\Comment{Each column is a test, e.g., Table \ref{Tb Available Rooms}}
		\State \textbf{create} the sheet corresponding to the test.
		\State \textbf{retrieve} from \textbf{Available\_Rooms.xls} the information: Rooms' List, Capacities: $ (c_{i})_{i = 1}^{N} $ and Demand: $ D $ corresponding to the test.
		\State \textbf{call dynamic programming solver} (Input: $ \{(c_{i})_{i = 1}^{N} , \sum_{i = 1}^{N}c_{i} - D\} $, Output $ (\xi_{i})_{i = 1}^{N} $  )
		\State \textbf{compute} $ x_{i} \defining \xi_{i} $ for $ i =1, \ldots, N $ 
		\State \textbf{call} Algorithm \ref{Alg Slack Distribution Algorithm} (Input $\{ (c_{i})_{i = 1}^{N}, (x_{i = 1}^{N}) , D  \} $, Output: $ (E_{i})_{i = 1}^{N}$ quantity of students in each chosen room)
		\State \textbf{save} $ (x_{i})_{i = 1}^{N}, (E_{i})_{i = 1}^{N},  (w_{i})_{i = 1}^{N}  $ in the sheet corresponding to the test together with the remaining information displayed in Table \ref{Tb Scheduled Rooms}.
		\EndFor
		\State \textbf{save} book Scheduled\_Rooms.xls
		\EndProcedure
	\end{algorithmic}
\end{algorithm}
%
%
%
%
\subsection{The Slack between available seats and students}\label{Sec Seats-Students Slack}
%
%
For most instances of Problem \ref{Pbl Room Decision Problem}, the optimal solution $ (x_{i})_{i = 1}^{N} $ will not satisfy actively the constraint \eqref{Eqn Room Decision Problem Capacity Constraint} i.e., there will be a slack between the number of available seats ($ \sum_{i\, = \,1}^{N}c_{i}x_{i} $) and the students taking the test ($ D $). The distribution of the slack among the chosen rooms gives rise to a new optimization process. For instance, suppose there are only two rooms available with of 55 seats each, 108 students and the student-proctor rate is $ r = 54 $. Then, the problem \ref{Pbl Room Decision Problem} will choose both rooms i.e., $ x_{1} = x_{2} = 1 $. Next, the greedy algorithm will program 54 students in each room in order to need 2 proctors, instead of programming 53, 55 which would demand 3 proctors. Essentially, the algorithm \ref{Alg Slack Distribution Algorithm}, tries to optimize the slack (from the previously attained solution) in order to reduce the number of proctors once more. To present the algorithm we define the following parameters
\begin{definition}\label{Def Slack and Sorting}
	Let $ \big(c_{i}: 1 \leq i \leq N \big) $ be the list of capacities of the available rooms for a given test, let $ D $ be the number of students taking the test and let $ r $ be the student-proctor rate. Let $ (x_{i})_{i = 1}^{N} $ be an optimal solution of Problem \ref{Pbl Room Decision Problem}. 
	\begin{enumerate}[(i)]
		\item We say that the total slack is given by $ S \defining \sum\limits_{i = 1}^{N} c_{i}x_{i} - D $.
		
		\item For each room, define the list of slack priority coefficients by 
		\begin{align}\label{Eqn Slack Priority Coefficients}
		& s_{i} \equiv c_{i} \mod r ,& 
		& \text{with } 0\leq s_{i} < r ,
		\end{align}
		for all $ i = 1, \ldots, N $. Where it is understood that $ c_{i} = q_{i} r + s_{i} $ with $ 0 \leq s_{i} < r $ being the remainder output in the Euclid's Division Algorithm (see Section 3.3 in \cite{Johnsonbaugh} for details).
	\end{enumerate}
\end{definition}
With the definitions above the, greedy algorithm \ref{Alg Slack Distribution Algorithm} below is implemented. 
\begin{algorithm}[h]
	\caption{Optimal Slack Distribution Algorithm, decides the slack distribution once an optimal solution $ (x_{i})_{i = 1}^{N} $ to Problem \ref{Pbl Room Decision Problem} is found. }
	\label{Alg Slack Distribution Algorithm}
	\begin{algorithmic}[1]
		\Procedure{Slack Distribution}{Capacities: $ (c_{i})_{i = 1}^{N} $, Demand: $ D $, Student-Proctor Rate: $ r $
		}
		
		\State \textbf{compute} list of specific slack priority coefficients $ (s_{i})_{i = 1}^{N} $ \Comment{Introduced in Definition \ref{Def Slack and Sorting}.}
		
		\State \textbf{sort} the list $  (s_{i})_{i = 1}^{N} $ in ascending order
		
		\State \textbf{denote} by $ \sigma \in \S[N] $ the associated ordering permutation, i.e., 
		\begin{align}\label{Eqn Sorting Permutation}
		& s_{\sigma(i)} \leq s_{\sigma(i + 1)}, &
		& \text{for all } i = 1, \ldots, N -1 .
		\end{align}

		\State $ Slack \defining \sum_{i = 1}^{N} c_{i} - D $ \Comment{Initializing the available slack}
		\State $ e_{i} \defining c_{i} $ for all $ i = 1, \ldots, N $ \Comment{Initializing the located students in each room (overcounted)}
		
		\For {$ i =1, \ldots, N $} 
		\If {$ (0 < Slack) $ and $ (Slack \leq s_{\sigma(i)}) $}
		\State $ e_{\sigma(i)} = e_{\sigma(i)} - Slack $
		\ElsIf { $ (Slack > s_{\sigma(i)}) $ }
		\State $ e_{\sigma(i)} = e_{\sigma(i)} - s_{\sigma(i)} $
		\State $ Slack = Slack - (s_{\sigma(i)} + 1) $ 
		\EndIf
		\State \textbf{compute} $ w_{i} \defining \big\lceil \frac{e_{i}}{r} \big\rceil  $. \Comment{Updated number of needed proctors}
		\EndFor 
		\State \Return $ (e_{i})_{i = 1}^{N}, (w_{i})_{i = 1}^{N} $		
		\EndProcedure
	\end{algorithmic}
\end{algorithm}
\begin{remark}\label{Rem Slack Distribution Algorithm}
	The slack priority coefficient, introduced in Definition \ref{Def Slack and Sorting} essentially seeks in which rooms a proctor can be reduced by removing the minimum possible number of programmed students. Namely, if $ t = 54 $ a room with 55 students can be proctored by only one TA, by removing only one student. If the room has 56, two students should be removed to attain the previous result, which is a stronger requirement in terms of available slack. In the same reasoning, it is better to remove one student from a room of $ 109 $ ($ s = 1 $) than 2 from a room of $ 56 $ ($ s = 2$).
\end{remark}
\begin{theorem}\label{Thm Computational Complexity Greedy Rooms}
	The computational complexity of Algorithm \ref{Alg Slack Distribution Algorithm} is $ \mathcal{O}(N \log_{2} N) $, where $ N $ is the total number of chosen rooms. 
\end{theorem}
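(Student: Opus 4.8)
The plan is to carry out a line-by-line accounting of the operations in Algorithm \ref{Alg Slack Distribution Algorithm} and to show that the sorting step is the unique superlinear contribution. First I would bound the pre-loop work. Computing the slack priority coefficients $(s_{i})_{i=1}^{N}$ requires one Euclidean division $c_{i} \bmod r$ per room, which, under the standard assumption that arithmetic on the integers involved is constant time, costs $\mathcal{O}(N)$ in total. The same holds for initializing $Slack \defining \sum_{i=1}^{N} c_{i} - D$ (a single-pass summation) and for the assignment $e_{i} \defining c_{i}$ across all rooms; each is $\mathcal{O}(N)$.

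Next I would analyze the sorting of $(s_{i})_{i=1}^{N}$ into ascending order. Using any comparison-based routine with optimal worst-case behaviour (merge sort, heapsort), this step together with the recovery of the associated ordering permutation $\sigma$ costs $\mathcal{O}(N \log_{2} N)$. The main \textbf{for} loop then iterates exactly $N$ times, and each iteration performs only a fixed number of comparisons in its conditional tests, at most one subtraction updating $e_{\sigma(i)}$, one update of $Slack$, and a single ceiling evaluation $w_{i} \defining \lceil e_{i}/r \rceil$. All of these are $\mathcal{O}(1)$ operations, so the loop contributes $\mathcal{O}(N)$ overall.

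Summing the parts yields $\mathcal{O}(N) + \mathcal{O}(N \log_{2} N) + \mathcal{O}(N) = \mathcal{O}(N \log_{2} N)$, since the $N \log_{2} N$ term dominates the linear ones. I do not expect a genuine obstacle here: the argument is essentially an aggregation whose only substantive content is the recognition that the sort is the lone superlinear step. The one point demanding care is the underlying cost model — the bound rests on treating each modular reduction, comparison, subtraction, and ceiling as constant-time, and on invoking a sort that attains the $\mathcal{O}(N \log_{2} N)$ comparison bound; had a naive $\mathcal{O}(N^{2})$ sort been used instead, the overall estimate would degrade accordingly.
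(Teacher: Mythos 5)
Your proposal is correct and follows essentially the same route as the paper's proof, which simply observes that the algorithm is a sorting-based greedy procedure and that MergeSort achieves the $\mathcal{O}(N \log_{2} N)$ bound; your line-by-line accounting of the $\mathcal{O}(N)$ pre-loop and loop costs makes explicit what the paper leaves implicit. No gap.
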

\begin{proof}
	This is a Greedy Algorithm based on sorting. It is known that the standard MergeSort algorithm does the job in $ \mathcal{O}(N \log_{2} N) $, see Section 19.2.2 in \cite{BonaWalk}
\end{proof}
%
%
\section{The Personnel Selection Problem: module Personnel\_Decision.py}\label{Sec Personnel Decision}
%
%
%
%
In this section we present the mathematical model for the problem of choosing personnel for proctoring shifts with a sense of equity. The input of this problem can be seen in Figure \ref{Fig Flux Diagram}: $ \bullet $ The number of tests, rooms and quantity of TA's needed to proctor (contained in the file Scheduled\_Rooms.xls) $ \bullet  $ The time availability for each TA $ \bullet $ The lecturers directly assigned to proctor (Professors.xls) $ \bullet $ The record of previous service of the each TA in a time window of interest (Proctor\_Log.xls). Also notice that in this particular case all the TAs are exchangeable (unlike the lecturers) therefore, the problem of choosing a proctoring crew will be done simultaneously for all the tests in the round, in contrast with the test-wise design model of \textit{Section} \ref{Sec Room Decision}. This section is divided in two parts, the presentation of the mathematical model and the presentation of the algorithm.
%
%
%
%
\subsection{The mathematical model}\label{Sec Personnnel Decision model}
%
%
In this section we derive a mathematical model for the problem of selecting personnel for proctoring duties (or shifts). It will be seen that the model is the Job Assignment problem with constraints. Before we can introduce it, some definitions are in order
\begin{definition}\label{Def Personnel Decision Variables}
	Let $ T $ be the number of tests in the round and let $ P $ be the total number of part-time employed TAs. 
	\begin{enumerate}[(i)]
		\item For each test $ t $, let $ N = N(t) $ be the number of scheduled rooms for the test and let $ (w_{i}^{(t)})_{i = 1}^{N(t)} $ be the list of needed proctors in each booked room. Then, the total number of proctors needed in the test is given by
		\begin{equation}\label{Eq Total Proctors per Test}
		W^{(t)} \defining \sum\limits_{i = 1}^{N(t)} w_{i}^{(t)} .
		\end{equation}

		\item For each $ t = 1, \ldots, T $ and $ p = 1, \ldots , P $, let $ y_{p}^{(t)} \in \{0,1\} $ be the decision variable defined by
		\begin{equation}\label{Eq Decison Variable per Test and TA}
		y_{p}^{(t)} \defining \begin{cases}
		1,  & \text{if proctor } p \text{ is assigned to test t }, \\
		0,  & \text{otherwise.}
		\end{cases}
		\end{equation}
		%
		Also define the availability coefficient $ a_{p}^{(t)} \in \{0, 1\} $ by
		\begin{equation}\label{Eq Availability per Test and TA}
		a_{p}^{(t)} \defining \begin{cases}
		1, & \text{if proctor } p \text{ is available at the time of test } t, \\
		0, & \text{otherwise.}
		\end{cases}
		\end{equation}

		\item For each $ p $ (identifying a TA), denote by $ L_{p} $ the number of served shifts in the Proctor\_Log.xls file (i.e., the TA's service record before running the algorithm). 
		
		\item The global service average $ \alpha $, after the current round of tests, is computed in the natural way, i.e.,
		\begin{equation}\label{Eqn Service Average}
		\alpha \defining \frac{1}{P}\Big(\sum_{p \, = \, 1}^{P} L_{p} + \sum_{t \, = \, 1 }^{T}W^{(t)} \Big) .
		\end{equation}
	\end{enumerate}
\end{definition}
\begin{remark}
	Observe that due to the expression \eqref{Eqn Service Average} the quantity $ \alpha $ is known. On the other hand, observe that $ \sum_{t = 1}^{T} y_{p}^{(t)} $ quantifies the total number of shifts that each TA has in the current round of tests. The alternative definition $ \alpha =  \frac{1}{P}\sum_{p \, = \, 1}^{P} (L_{p} + \sum_{t = 1}^{T} y_{p}^{(t)} ) $, would be mathematically equivalent to \eqref{Eqn Service Average}, but it would (unnecessarily)  include $ \alpha $ as a variable.
\end{remark}
With the definitions above, it is direct to see that we want to solve the following problem
\begin{problem}\label{Pbl Original MInimization Job Assignment Problem}
	\begin{subequations}\label{Eqn Original Job Assignment Problem}
		\begin{equation}\label{Eqn Original Job Assignment Objective Function}
		\min \max_{p \, = \, 1}^{P}\Big\vert \sum\limits_{t\, = \, 1 }^{T} y_{p}^{(t)} + L_{p} - \alpha \Big\vert .
		\end{equation}
		Subject to
		\begin{align}\label{Eqn Original Job Assignment Necessary Proctors Constraints}
		& \sum_{p \, = \, 1}^{P} a_{p}^{(t)}y_{p}^{(t)} = W^{(t)} , &
		& \text{for all } t = 1, \ldots, T.
		\end{align}
		\begin{align}\label{Eqn Original Job Assignment Choice Constraint}
		&  y_{p}^{(t)} \in \{0,1\}, &
		& \text{for all } t = 1, \ldots T \text{ and } p = 1, \ldots, P .
		\end{align}	
	\end{subequations}
\end{problem}
The problem \ref{Pbl Original MInimization Job Assignment Problem} is clearly not a linear programming problem. In order to transform it, we introduce a new continuous variable $ z $ (not necessarily integer), verifying the constraints $ \big\vert \sum_{t\, = \, 1 }^{T} y_{p}^{(t)} + L_{p} - \alpha \big\vert  \leq z $ for all $ p = 1\ldots, P $. Clearly, the absolute sense can be decoupled in two inequality constraints and given that this is an integer programming problem, the following refinement is introduced
\begin{equation}\label{Eq Proctors Service-Equity Constraints}
\begin{split}
\sum\limits_{t\, = \, 1 }^{T} y_{p}^{(t)} + L_{p} - \lceil\alpha\rceil 
\leq \sum\limits_{t\, = \, 1 }^{T} y_{p}^{(t)} + L_{p} - \alpha 
\leq z, \\
- z \leq \sum\limits_{t\, = \, 1 }^{T} y_{p}^{(t)} + L_{p} - \alpha 
\leq \sum\limits_{t\, = \, 1 }^{T} y_{p}^{(t)} + L_{p} - \lfloor\alpha\rfloor .
\end{split}
\end{equation}
With the introduction of the variable $ z $ above and the natural linear relaxation, the problem \ref{Pbl Original MInimization Job Assignment Problem} can be reformulated as the following linear optimization problem. 
\begin{problem}\label{Pbl Job Assignment Problem}
	With the variables and quantities introduced in Definition \ref{Def Personnel Decision Variables}, the problem of assigning TAs for proctoring duties as fairly as possible, considering their time constraints, is modeled by the problem 
	\begin{subequations}\label{Eqn Job Assignment Problem}
		\begin{equation}\label{Eqn Job Assignment Objective Function}
		\min z .
		\end{equation}
		Subject to
		\begin{equation}\label{Eqn Job Assignment Absolute Sense Constraints}
		\begin{split}
		& \sum\limits_{t\, = \, 1 }^{T} y_{p}^{(t)} - z \leq - L_{p} + \lceil\alpha\rceil , \\
		& \sum\limits_{t\, = \, 1 }^{T} y_{p}^{(t)}  - z \leq L_{p} - \lfloor\alpha\rfloor,  \quad
		\forall p = 1, \ldots , P.
		\end{split}
		\end{equation}
		\begin{align}\label{Eqn Job Assignment Necessary Proctors Constraints}
		& \sum_{p \, = \, 1}^{P} a_{p}^{(t)}y_{p}^{(t)} = W^{(t)} , &
		& \text{for all } t = 1, \ldots, T.
		\end{align}
		%
		\begin{align}\label{Eqn Job Assignment Choice LR Constraint}
		&  0 \leq y_{p}^{(t)} \leq 1 , &
		& \text{for all } t = 1, \ldots , T \text{ and } p = 1, \ldots, P .
		\end{align}	
	\end{subequations}
\end{problem}
\begin{theorem}
	For the job assignment problem, the binary integer choice constraint \eqref{Eqn Original Job Assignment Choice Constraint}, can be replaced by the natural linear relaxation constraint \eqref{Eqn Job Assignment Choice LR Constraint} and both problems have the same integer optimal solution.
\end{theorem}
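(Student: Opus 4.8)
The statement is an integrality (``the linear relaxation is exact'') assertion, so the plan is to show that the polytope of Problem \ref{Pbl Job Assignment Problem} -- cut out by \eqref{Eqn Job Assignment Absolute Sense Constraints}, the coverage equalities \eqref{Eqn Job Assignment Necessary Proctors Constraints} and the box \eqref{Eqn Job Assignment Choice LR Constraint} -- attains its minimum of $z$ at a vertex whose $y$--coordinates are integral; since $0\le y_p^{(t)}\le 1$ this forces $y_p^{(t)}\in\{0,1\}$, recovering \eqref{Eqn Original Job Assignment Choice Constraint}. The classical tool is the Hoffman--Kruskal theorem: if the constraint matrix is totally unimodular and the right--hand side is integral, then every basic feasible solution is integral. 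All data here are integral ($W^{(t)}\in\mathbb{N}$ by \eqref{Eq Total Proctors per Test}, $L_p\in\mathbb{N}$, and $\lceil\alpha\rceil,\lfloor\alpha\rfloor\in\mathbb{Z}$), so the whole question reduces to a total--unimodularity analysis.

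The key structural observation I would exploit is that, once the equity variable $z$ is frozen, Problem \ref{Pbl Job Assignment Problem} becomes a capacitated transportation problem. Indeed, fixing $z$ turns \eqref{Eqn Job Assignment Absolute Sense Constraints} into two--sided bounds $\ell_p\le\sum_{t}y_p^{(t)}\le u_p$ on each proctor's total load, with integral $\ell_p,u_p$; together with \eqref{Eqn Job Assignment Necessary Proctors Constraints} this is exactly the node--arc incidence system of the bipartite graph whose parts are the tests and the TAs and whose arcs are the available pairs $\{(t,p):a_p^{(t)}=1\}$. Each admissible $y_p^{(t)}$ then occurs in exactly one test row and in the proctor--$p$ load rows, so the matrix is (essentially) a bipartite incidence matrix and hence totally unimodular; thus, for fixed integral $z$, the $y$--polytope has only integral vertices.

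With this in hand I would finish in two moves. First, relaxation gives the trivial inequality $z^{*}_{\mathrm{LP}}\le z^{*}_{\mathrm{IP}}$ between the optimal values. Second, to produce an integral optimizer I would argue that the optimal value of $z$ may be taken in $\mathbb{Z}$: for integral $y$ each quantity $\sum_{t}y_p^{(t)}+L_p-\lceil\alpha\rceil$ and $\lfloor\alpha\rfloor-L_p-\sum_{t}y_p^{(t)}$ is an integer, so the smallest feasible $z$ over integral assignments is an integer; pinning $z$ to that integer and invoking the transportation integrality of the previous paragraph yields an integral $y$ of the same cost, whence $z^{*}_{\mathrm{IP}}\le z^{*}_{\mathrm{LP}}$ and equality of the two optima at a common integral point.

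The main obstacle is precisely the coupling through $z$ together with the possibly fractional average $\alpha$ from \eqref{Eqn Service Average}. The augmented matrix that also carries the $z$--column is \emph{not} totally unimodular in general: already the submatrix formed by the $z$--column and two ``upper'' together with one ``lower'' row of \eqref{Eqn Job Assignment Absolute Sense Constraints} exhibits a $3\times 3$ minor equal to $\pm 2$. Hence one cannot apply Hoffman--Kruskal to the full system in a single stroke, and the reduction to a transportation problem is legitimate only \emph{after} $z$ has been fixed to an integer. The delicate point -- the one I expect to require genuine care -- is therefore to certify that fixing $z$ at the best integral level loses nothing, i.e. that no strictly fractional assignment (splitting a shift between two TAs) can push $z$ below the best integral value. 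I would attempt this by a cycle--cancelling/exchange argument on an optimal relaxed vertex: the fractional entries of $y$ support an even cycle in the bipartite test--TA graph along which flow can be shifted without disturbing the coverage totals \eqref{Eqn Job Assignment Necessary Proctors Constraints}, and the crux is to verify that the resulting rounding never increases any proctor's deviation, and hence never increases $z$.
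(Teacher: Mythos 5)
Your overall strategy (integrality via Hoffman--Kruskal / total unimodularity) is exactly what the paper's one-line proof gestures at --- it simply defers to Chapter 4 of \cite{Conforti} --- so you are trying to instantiate the intended argument rather than departing from it. You also correctly identify the obstruction: the $z$-column creates $3\times 3$ minors of absolute value $2$, so the full constraint matrix of Problem \ref{Pbl Job Assignment Problem} is not totally unimodular, and only the system obtained by freezing $z$ at an integer is a TU transportation system. Up to that point your analysis is sound, and indeed sharper than the paper's bare citation.

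The gap is precisely the step you flag as delicate, and it cannot be closed, because the assertion it is meant to establish is false: a fractional assignment can push $z$ strictly below the best integral value. Take $T=1$, $P=2$, $W^{(1)}=1$, $a_1^{(1)}=a_2^{(1)}=1$, $L_1=L_2=0$; then $\alpha=\tfrac12$ by \eqref{Eqn Service Average}, so $\lceil\alpha\rceil=1$ and $\lfloor\alpha\rfloor=0$, and the constraints obtained by decoupling the absolute value as in \eqref{Eq Proctors Service-Equity Constraints} read $z\ge y_p^{(1)}-1$ and $z\ge -y_p^{(1)}$ for $p=1,2$, together with $y_1^{(1)}+y_2^{(1)}=1$. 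Every integral point forces $z\ge 0$, but $y_1^{(1)}=y_2^{(1)}=\tfrac12$ is feasible with $z=-\tfrac12$ and is the unique LP optimizer. Hence the cycle-cancelling rounding you propose must increase some proctor's deviation (any rounding here sends one $y_p^{(1)}$ to $1$ and raises $z$ from $-\tfrac12$ to $0$), and no exchange argument can deliver $z^{*}_{\mathrm{IP}}\le z^{*}_{\mathrm{LP}}$. What does survive of your argument is the fixed-$z$ statement: for each integer value of $z$ the feasible $y$-set is an integral transportation polytope, so an integral optimum can be recovered by searching over integer $z$ --- a strictly weaker conclusion than the theorem as stated, and not what the single LP-solver call in Algorithm \ref{Alg Personnel Decision Algorithm} performs. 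In short, the theorem needs either this restatement or an additional hypothesis; your proof correctly exposes why the citation to \cite{Conforti} does not cover the coupled system.
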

\begin{proof}
	See Chapter 4 in \cite{Conforti}.
\end{proof}
%
%
%
%
\subsection{The Personnel\_Decision.py module}\label{Sec Personnel Decision Algorithm}
%
%
With the exposition above, the algorithm of the module Personnel\_Decision.py is summarized in the pseudocode \ref{Alg Personnel Decision Algorithm}.
\begin{algorithm}[h]
	\caption{Personnel Decision Algorithm, decides the proctoring crew for all the tests in the round.}
	\label{Alg Personnel Decision Algorithm}
	\begin{algorithmic}[1]
		\Procedure{Personnnel Decision}{Scheduled\_Rooms.xls file, Personnel\_Time.xls, Proctor\_Log.xls, Profesors.xls.}
		\State \textbf{load} the information corresponding to the input files.
		\State \textbf{include} the lecturers in the file Scheduled\_Crew.xls, assigned to their corresponding courses.
		\State \textbf{compute} the number of necessary proctors for each test. \Comment{Coefficients ($ W^{(t)}, t = 1, \ldots, T $), Equation, \eqref{Eq Total Proctors per Test}.}
		\State \textbf{account for} the time constraint coefficients for each TA. \Comment{Coefficientes ($ a_{p}^{(t)}, t = 1, \ldots, T, p = 1, \ldots, P $), Equation \eqref{Eq Availability per Test and TA}.}
		\State \textbf{include} the proctors necessity constraints \Comment{Equation \eqref{Eqn Job Assignment Necessary Proctors Constraints}.}
		\State \textbf{compute} the service average value \Comment{Coefficient $ \alpha $, Equation \eqref{Eqn Service Average}.}
		\State \textbf{include} the proctors service-equity constraints \Comment{Equation \eqref{Eqn Job Assignment Absolute Sense Constraints}.}
		\State \textbf{solve} the Problem \ref{Pbl Job Assignment Problem} $ \leftarrow $ \textbf{call} a linear program solver \Comment{The system uses scipy.optimize.linprog}
		\State \textbf{create and save} the Updated\_Proctor\_Log.xls file.
		\State \textbf{create and save} the Scheduled\_Crew.xls file.
		\EndProcedure
	\end{algorithmic}
\end{algorithm}
\begin{remark}[Computational Complexity]\label{Rem Computational Complexity Personnel Decision}
	The algorithm \ref{Alg Personnel Decision Algorithm} solves the problem using the Simplex method. It is know that the Simplex algorithm can grow exponentially with respect to the number of constraints, for some particular cases (see Theorem 3.5 in \cite{Bertsimas}). However, it has been observed in the practice that the method typically takes $ \mathcal{O}(m) $ iterations and $ \mathcal{O}(m n) $ operations per iteration. Where $ m $ is the number of constraints and $ n $ is the dimension of the vector. In the particular case of RaPID\Lightning$ \Omega $, it can be safely assumed that this is the computational complexity of  Algorithm \ref{Alg Personnel Decision Algorithm}. Given that $ n = T \times P + 1 $ and $ m = 2T + P $, it follows that the algorithm runs in time
	\begin{equation}\label{Eq Computational Complexity Personnel Decision}
	\mathcal{O}\big( (2T + P )^{2} (T + P)\big),
	\end{equation}
	i.e., it is polynomial in the number of tests and the number of proctors.
\end{remark}
%
%
%
%
%
%
\section{The Crew Organization Problem: module Crew\_Organization.py}\label{Sec Crew Organization}
%
%

Once the proctoring crew is decided in the previous steps, the final step is to distribute the TAs in strategic positions in order to maximize the proctoring quality. More specifically, it is clear that rooms with only one proctor assigned should get the more experienced proctors, while a mixed experienced-unexperienced couple should be assigned to rooms needing two proctors. In the same fashion, the supervisor should be an experienced undergraduate TA, as it is a position with higher responsibility. At this point the proctors are no longer exchangeable between tests, therefore the process is done independently as in \textit{Section \ref{Sec Room Decision}}. Although, it is possible to construct a mathematical integer programming model for this stage, it is more practical to use a greedy algorithm for making these decisions (see \cite{Dasgupta} for a comprehensive exposition on greedy algorithms).
\footnotesize
\begin{algorithm}[h!]
	\caption{Crew Organization Algorithm, decides proctoring positions.}
	\label{Alg Crew Organization Algorithm}
	\begin{algorithmic}[1]
		\Procedure{Crew Organization}{Scheduled\_Rooms.xls, Scheduled\_Crew.xls
		}
		\State \textbf{load} the sheet Scheduled\_Crew.xls as Data Frame
		\State \textbf{load} the Excel book Scheduled\_Rooms.xls as Data Frame
		\State \textbf{create} column ``Num\_Level" in Scheduled\_Rooms.xls (3 Undergraduate, 2 Graduate, 1 PhD)
		\State \textbf{create} the Excel book Proposed\_Programming.xls
		\For{test\_idx in \textbf{Scheduled\_Rooms.xls} } \Comment{Each sheet in the book is a test.}
		\State \textbf{create} sheet test\_idx in the book Proposed\_Programming.xls
		\State \textbf{select} the rooms scheduled for test\_idx, i.e., Scheduled\_Rooms[test = test\_idx].
		\State \textbf{create} the Local\_Test\_Frame, with columns: Room, Proctors, Position, Students \Comment{The column ``Position" indicates if it is the 1st, 2nd, 3rd... proctor in the room.}
		\State \textbf{define} $ns \defining $ the number of needed supervisors.
		\State \textbf{create} Local\_Proctors\_Frame $ \leftarrow $ Scheduled\_Crew[Test = test\_idx] \Comment{The proctors assigned to the test.}
		\State \textbf{sort} lexicographically the Local\_Proctors\_Frame with the priority: [``Num\_Level", ``Experience"] in the order [Descending, Descending].
		\State \textbf{select} Undergraduate\_Local\_Proctors\_Frame $ \leftarrow $ Local\_Proctors\_Frame[Level = Undergraduate]
		\Comment{Choose the undergraduate proctors from the local scheduled proctors frame.}
		\State \textbf{assign} the first $ ns $ proctors from Undergraduate\_Local\_Proctors\_Frame as supervisors. \State \textbf{save} the supervisors in the sheet test\_idx 
		\State \textbf{remove} the chosen supervisors from Local\_Proctors\_Frame.  
		\State \textbf{remove} the supervisors' rows from Local\_Test\_Frame. 
		\State \textbf{sort} lexicographically the Local\_Test\_Frame with the priority: [``Proctors", ``Position", ``Students"] in the orders [Ascending, Ascending, Descending]. 
		\State \textbf{paste} Local\_Test\_Frame with Local\_Test\_Frame and \textbf{define} the outcome as Local\_Programming. 
		\State \textbf{save} Local\_Programming in the sheet test\_idx of the book Proposed\_Programming.xls.
		\EndFor
		\State \textbf{save} book Proposed\_Programming.xls
		\EndProcedure
	\end{algorithmic}
\end{algorithm}
\normalsize
The greedy algorithm \ref{Alg Crew Organization Algorithm} is based on lexicographic sorting of the scheduled crew's data. More specifically a pairing (match) has to be done: the proctoring positions and the proctors themselves. It is direct to see that the proctors should be sorted according to their academic level in the first place (3 Undergraduate, 2 Graduate, 1 Lecturer) and to their experience in the second place (see Table \ref{Tb Personnel Time}). The proctoring positions need further explanation. Observe that any test will need a number of supervisors (according to the enrollment size) and a number of proctors in each room. The process of selecting supervisors and the process of selecting proctoring positions are different. Hence, two separate steps will be done for completing the tasks. 

With the observations above, the Crew\_Organization.py module works with the algorithm \ref{Alg Crew Organization Algorithm} below.
\begin{table}[h!]
	\begin{center}
		\normalsize{
			\rowcolors{2}{gray!25}{white}
			\caption{Example of Scheduled Rooms Remark \ref{Rem Crew Organization}}\label{Tb Sorting Rooms Start}
			\begin{tabular}{ c c c }
				\hline
				\rowcolor{gray!50}
				Room & Proctors  & Students \\
				\hline
				16-223 & 2 & 63 \\
				46-209 & 1 & 50 \\
				46-307 & 2 & 80 \\
				Supervisor 1 & & \\
				\hline
			\end{tabular}
		}
	\end{center}
\end{table}
\begin{table}[h!]
	\begin{center}
		\normalsize{
			\rowcolors{2}{gray!25}{white}
			\caption{Example of Sorted Rooms Remark \ref{Rem Crew Organization}}\label{Tb Sorting Rooms}
			\begin{tabular}{ c c c c }
				\hline
				\rowcolor{gray!50}
				Room & Proctors & Position & Students \\
				\hline
				46-209 & 1 & 1 & 50 \\
				46-307 & 2 & 1 & 80 \\
				16-223 & 2 & 1 & 63 \\
				46-307 & 2 & 2 & 80 \\
				16-223 & 2 & 2 & 63 \\
				Supervisor 1 & & & \\
				\hline
			\end{tabular}
		}
	\end{center}
\end{table}
\begin{remark}\label{Rem Crew Organization}
	Some clarifications are necessary for a deeper understanding of the design and purpose of the algorithm \ref{Alg Crew Organization Algorithm}.
	\begin{enumerate}[(i)]
		\item To understand the line 9 in Algorithm \ref{Alg Crew Organization Algorithm} consider the example of Table \ref{Tb Sorting Rooms Start}. One supervisor is needed, the room 46-209 needs one proctor while the rooms 16-223 and 46-307 need two. Hence, a data frame including positions, has to be created; see Table \ref{Tb Sorting Rooms}.
		
		\item The command of line 18 in Algorithm \ref{Alg Crew Organization Algorithm} has the following motivation. Consider the example of Table \ref{Tb Sorting Rooms Start}. Then, line 16 in Algorithm \ref{Alg Crew Organization Algorithm} would deliver the table \ref{Tb Sorting Rooms} below. It is clear that once the scheduled proctors are sorted as indicated in line 10 of the algorithm \ref{Alg Crew Organization Algorithm}, the most experienced proctors will be distributed, as evenly as possible through the rooms; taking first the rows with one proctor first. The next rows will be those of 2 proctors with higher number of students having label ``1" in the column ``Position", here will be assigned the not so experienced proctors. Next, the less experienced proctors will be assigned to the rows having ``2" in the column ``Position", and so forth. 
	\end{enumerate}
\end{remark}
\begin{theorem}\label{Thm Computational Complexity Greedy Crew}
	The computational complexity of Algorithm \ref{Alg Crew Organization Algorithm} is $ \mathcal{O}(P \log_{2} P) $, where $ P $ is the maximum number of proctors that a test in the round has. 
\end{theorem}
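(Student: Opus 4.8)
The plan is to follow the strategy already used for Theorem \ref{Thm Computational Complexity Greedy Rooms}: locate the dominant operation of Algorithm \ref{Alg Crew Organization Algorithm} and bound every other step by it. First I would observe that, for a fixed test \texttt{test\_idx}, every data frame manipulated inside the loop body (\texttt{Local\_Proctors\_Frame}, \texttt{Local\_Test\_Frame}, \texttt{Undergraduate\_Local\_Proctors\_Frame}, and the supervisor list) has a number of rows bounded by $P$, the maximum number of proctors a test in the round has; indeed each proctor occupies one row, and the number of proctoring positions never exceeds the number of assigned proctors. Consequently the construction of these frames, the selection of sub-frames, the extraction and saving of the $ns$ supervisors, the removal of the corresponding rows, and the final pasting are all linear scans or copies over at most $P$ rows, each running in $\mathcal{O}(P)$.

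Next I would isolate the two lexicographic sorting steps --- the sort of the local proctors frame by \texttt{[Num\_Level, Experience]} and the sort of the local test frame by \texttt{[Proctors, Position, Students]} --- as the only potentially super-linear operations. Each sorts at most $P$ rows against a fixed number of keys (two keys in the first case, three in the second). Since the number of keys is a constant independent of $P$, every lexicographic comparison is evaluated by inspecting at most a constant number of coordinates and hence costs $\mathcal{O}(1)$; the standard comparison-based analysis therefore applies unchanged, and invoking MergeSort (Section 19.2.2 in \cite{BonaWalk}) bounds each sort by $\mathcal{O}(P \log_{2} P)$. As only a constant number of such sorts are performed and they dominate the $\mathcal{O}(P)$ bookkeeping, the cost of processing one test is $\mathcal{O}(P \log_{2} P)$, which is the claimed bound.

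The only genuinely delicate point --- and where I would concentrate the argument --- is justifying that a lexicographic sort on several keys retains the $\mathcal{O}(P \log_{2} P)$ bound of a single-key sort rather than degrading with the number of keys. The clean way to see this is that a lexicographic order on a fixed tuple of attributes is itself a total order whose pairwise comparison terminates after examining at most a constant number of coordinates; each comparison is thus $\mathcal{O}(1)$ and the comparison-sort bounds carry over verbatim. This is exactly the reasoning behind Theorem \ref{Thm Computational Complexity Greedy Rooms}, the sole difference being the number of sort keys, so no new machinery is needed. Finally, if one wishes to account for the outer loop over all $T$ tests, the total running time is $\mathcal{O}(T\, P \log_{2} P)$; the stated $\mathcal{O}(P \log_{2} P)$ is the per-test complexity, with $T$ regarded as a constant, consistent with the preceding complexity statements.
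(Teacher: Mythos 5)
Your proposal is correct and follows the same route as the paper: the paper's own proof is a one-line observation that the algorithm is a sorting-based greedy procedure whose cost is dominated by MergeSort at $\mathcal{O}(P \log_{2} P)$, citing Section 19.2.2 of \cite{BonaWalk}, exactly as you do. Your write-up simply fills in the details the paper leaves implicit (the linear bookkeeping steps, the constant-cost lexicographic comparisons, and the per-test versus whole-round accounting), all of which are consistent with the paper's intent.
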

\begin{proof}
	This is a Greedy Algorithm based on sorting. It is known that the standard MergeSort algorithm does the job in $ \mathcal{O}(N \log_{2} N) $, see Section 19.2.2 in \cite{BonaWalk}
\end{proof}
%
%
%
%
\section{The Input Datasets}\label{Sec Input Datasets}
%
%
In the present section we describe the input data files for the RaPID\Lightning$ \Omega $ system, explain the contents and structure of each dataset, as well as its motivation. In our study case, a round of tests for the seven courses is typically scheduled in the span of two weeks. The input files are five: Available\_Rooms.xls, Room\_Data.xls, Personnel\_Time.xls, Proctor\_Log.xls and Professors.xls. The first two are concerned with physical spaces, while the last three accounting for human resources.
%
%
%
%
\subsection{The Available\_Rooms.xls file}\label{Sec Available Rooms}
%
%
The first input data is the spreadsheet \textbf{Available\_Rooms.xls}, which contains the structure presented in the example of Table \ref{Tb Available Rooms} below. For the School of Mathematics, the first step in programming an examination is to request the necessary seats for each activity to the University's \textit{Office of Building Management}. The information sent is summarized in the last three rows of Table \ref{Tb Available Rooms}. Conversely, the Office of Building Management replies a list of rooms available at the time and date requested and capable of holding the necessary enrollment, see Table \ref{Tb Available Rooms} for an example. 
\begin{table}[h!]
	\caption{Example of Available Rooms}\label{Tb Available Rooms}
	\def\arraystretch{1.2}
	\footnotesize{
		\begin{center}
			\rowcolors{2}{gray!25}{white}
			\begin{tabular}{ c c c c c c c c }
				\hline
				\rowcolor{gray!50}
				Room
				& DC
				& IC
				& VC
				& VAG
				& LA
				& ODE
				& NM 
				\\ 
				\hline
				03-210	& 30 & & & & & & \\						
				04-108	& 29 & & & & & & \\							
				04-109	& 60 & & & & & & \\							
				04-110	& & & &	58 & & & \\			
				04-111	& & & & & &  & \\							
				04-206  & 65 & & & & & & \\							
				05-101	& & & & & & & 79 \\
				11-102	& & & & & & 56 &	\\
				11-124	& 22 & 22 & & & & & \\						
				11-125	& &	41 & & & & & \\						
				11-202	& & & & & & 150 & \\	
				11-203	& & 39 &	& 39 & & 43 & \\	
				11-208	& 50 & 50 &	& 50 & & & \\			
				11-209	& 30& 30 &	& 30 & & & \\			
				11-225	& &	60	& &	60 & &	60 & \\	
				14-109	& 51 & 49 & & & & & \\					
				14-232	& 50 & & & 50 & & & \\			
				16-223	& 63 & 63 &	63 & 63 & 63 & & \\		
				16-224	& 60 & 72 & 60 & 72	& 60 & 58 & \\	
				21-303	& 34 & & & 34 & & & \\			
				21-307	& 52 & & 52 & & 52 & & \\		
				21-314	& 79 & 79 &	79 & 70 & 79 & & \\		
				21-320	& 64 & 64 &64 & 64	& 64 & 58 & \\	
				21-328	& & & 30 & & & & \\			
				21-331	& 29 & & & & & & \\						
				24-307	& & & & & & 170 & \\	
				25-301	& & & & & &	90 & 79 \\
				41-102	& & 106 & & & & & \\					
				41-103	& 106 & & 106 & & 106 &	106 & 102\\
				43-110	& & & &	54 & & 53 & \\	
				43-111	& & & & & & 53 & \\	
				46-114	& & & &	56 & & & \\			
				46-208	& 47 & 47 & & 47 & & & \\		
				46-209	& 50 & 50 & 50 & & 50 & & \\		
				46-210	& 52 & & 52 & & 52 & & \\	
				46-211	& 52 & 52 & & 52 & & & \\			
				46-212	& 49 & & & 49 & & & \\			
				46-301	& 44 & 44 &	& 44 & & & \\			
				46-303	& &	43 & & 43 & & & \\			
				46-304	& 41 &	41 & & 41 & & & \\			
				46-307	& 100 &	100	& 100 & & 100 & & \\		
				46-311	& &	41	& & & & & \\				
				\hline	 
				Students & 1300 & 1050 & 608 & 951 & 600 & 822 & 150 \\						
				Date &	30-III & 01-IV &	06-IV &	01-IV &	06-IV &	02-IV &	05-IV \\
				Time &	Sa 12-14 &	Mo 08-10 &	Sa 14-16 &	Mo 10-12 &	Sa 12-14 &	Mo 10-12 &	Mo 08-10 \\
				\hline
			\end{tabular}
		\end{center}
	}
\end{table}
It should be noticed that there is a slack for all the cases i.e., the total capacity of the available rooms always exceeds the number of students: DC 1354, IC 1093, VC 626, VAG 1006, LA 626, ODE 897, MN 402. Consequently, the choice of available rooms can be done so to minimize the number of necessary proctors. 
\begin{remark}[Format Available\_Rooms.xls]\label{Rem Available Rooms}
	Some format guidelines  \textbf{must be observed} in the Available\_Rooms.xls file for the correct functioning of RaPID\Lightning$ \Omega $.
	\begin{enumerate}[(i)]
		
		
		\item The time must include the abbreviation day, a blank space, two digits for each hour and a hyphen in between: \textbf{dd TT-TT}, e.g., \textbf{Mo 08-10} instead of \textbf{Mo 8-10}. Of course, the time slots can be modified according to the scheduling needs (e.g. Monday from 9:00 to 12:00), as long as the format \textbf{dd TT-TT}, is consistently preserved through the datasets (e.g. the file Personnel\_Time.xls in Section \ref{Sec Personnel Time}).
		
		\item The names of the courses (acronyms or not) in the columns can be modified as long as the labels are consistent with those in the file Professors.xls. Naturally, the columns of the file can be increased or decreased according to necessity.
		
		\item The data in the column ``Room", need not be sorted.
		
		\item The capacity of each room need not be written on the courses columns, it suffices to write ``1", to indicate that the room is available for the activity. The system will actually read the value of the room capacity from the file Room\_Data.xls (see Section \ref{Sec Room Data}); however, it may be desirable to write the capacity in this file, in order to check on the input spreadsheet itself, if the number of available seats is greater or equal than the number of students. 
	\end{enumerate} 
\end{remark}
%
%
%
%
\subsection{The Room\_Data.xls file}\label{Sec Room Data}
%
%
The Room\_Data.xls sheet centralizes the information about all the classrooms on campus, not only those needed for the examination activities, see Table \ref{Tb Room Data} for a minimal example. 
\begin{table}[h!]
	\caption{Example of Room Data}\label{Tb Room Data}
	\def\arraystretch{1.2}
	\normalsize{
		\begin{center}
			\rowcolors{2}{gray!25}{white}
			\begin{tabular}{ c c c }
				\hline
				\rowcolor{gray!50}
				Room &	Capacity & Observations\\
				\hline
				03-210 & 32 & Doorkeeper	\\
				04-108 & 28	& Key \\	
				04-109 & 70 & Access Code \\		
				04-207 & 70	& Wheelchair Ramp \\	
				05-101 & 80	& Doorkeeper \\	
				11-102 & 56	& Card	
				\\ 
				\hline
			\end{tabular}
		\end{center}
	}
\end{table}

The table contains the number of seats or capacity of each room and a column of observations where some annotations can be made such as: how is the room to be opened or if it has accommodations for students with disabilities. This file is the most stable of all, as it changes only when the nature of the rooms change, therefore it is less vulnerable to human error than the file Available\_Rooms.xls; this is why RaPID\Lightning$ \Omega $ reads the capacities from this file.      
%
%
%
%
\subsection{The Personnel\_Time.xls file}\label{Sec Personnel Time}
%
%
The third input data for the system is the spreadsheet \textbf{Personnel\_Time.xls}, it has the structure presented in the minimal example of Table \ref{Tb Personnel Time} below. As explained in the introduction the team of proctors is not made of full-time employees, therefore there are time constraints when scheduling a test. This is particularly acute in the case of the Teaching Assistants whose labor duties amount to 10 hours per week and their academic duties may be time-conflicting with the examination activities. 
\begin{table}[h!]
	\caption{Example of Time Personnel}\label{Tb Personnel Time}
	\def\arraystretch{1.2}
	\normalsize{
		\begin{center}
			\rowcolors{2}{gray!25}{white}
			\begin{tabular}{ c c c c c c c c c}
				\hline
				\rowcolor{gray!50}
				Name &	Cell &	email &	ID	 & Experience &	
				Level  &	Mo 10-12 & Mo 12-14 & Sa 08-10\\
				\hline
				TA 1 &	 C 1	& 1@m.co & ID 1 &	1 &	Undergraduate & Day Off &	& 1 \\ 
				TA 2 &	 C 2	& 2@m.co & ID 2 &	2 & Undergraduate & Busy & 1 & 1 \\ 
				TA 3 &   C 3	& 3@m.co & ID 3 &	1 & Undergraduate & 1 & Class & 1 \\ 
				TA 4 &	 C 4	& 4@m.co & ID 4 &	1 &	Undergraduate &	1 & 1 & 1 \\
				TA 5 &	 C 5	& 5@m.co & ID 5 & 2 & Undergraduate & 1 & NA & 1 \\ 
				TA 6 &	 C 6	& 6@m.co & ID 6 & 2 &	Postgraduate &	1 & 0 & 1 
				\\ 
				\hline
			\end{tabular}
		\end{center}
	}
\end{table}

The table contains fields for identification (``Name" and ``ID"), contact (``Cell", ``email") and rating (``Experience" and ``Level"). Finally, the availability fields are represented by multiple time slots; in this minimal example only three time-slots were included: Mo 08-10,  Mo 10-12, Sa 08-10. In practice, all possible time-slots should be contained in the table, namely: Mo 08-10, Mo 10-12, ..., Mo 16-18, Td 08-10, ..., Fr 16-18, Sa 08-10, ..., Sa 16-18. 
\begin{remark}[Format Personnel\_Time.xls]\label{Rem Personnel Time}
	The following instructions must be observed when building the Personnel\_Time.xls sheet.
	\begin{enumerate}[(i)]
		
		\item The time slots \textbf{dd TT-TT}, must be consistent with those of the file Available\_Rooms.xls (see Remark \ref{Rem Available Rooms} (iii) Section \ref{Sec Available Rooms}) for the system to work properly. They can be modified according to the scheduling needs (e.g. Monday from 9:00 to 12:00), as long as the consistency between datasets and the format \textbf{dd XX-YY} are preserved.
		
		\item The time slots columns \textbf{dd TT-TT} must indicate whether or not an individual is available. To indicate availability use ``1". 
		
		\item Only \textbf{availability} indicated by the number ``1" is important for later calculations. In particular, if any other information is set (e.g. ``Available", ``Free", ``Busy", etc.) the system will understand that the individual is \textbf{unavailable} at that time slot.
		
		\item There is no need to indicate unavailability. As shown in the example, the reason why a TA is not available, can be declared  or not. Unavailability can also be marked with a ``0" as in the example above. However, this annotations will not impact on the system. 
		
		\item The column ``Level" has to be filled with the words ``Undergraduate" or ``Postgraduate", for the system to understand the academic level of each TA. Furthermore, this information will play a key role in the greedy algorithms of the module Crew\_Organization.xls, see Section \ref{Sec Crew Organization} and Algorithm \ref{Alg Crew Organization Algorithm}. 
		
	\end{enumerate}
\end{remark}
%
%
%
%
%
%
\subsection{The Proctor\_Log.xls file}\label{Sec Proctor Log}
%
%
The \textbf{Proctor\_Log.xls} is a file containing the record of proctoring duties that the TAs have served in an observation time-window namely: a term, a semester or a year (depending on the institutional policy). Its structure is presented in the minimal example of Table \ref{Tb Proctor Log} below and it is fairly similar to that of Personnel\_Time.xls; it agrees on the columns holding each TA's information. Each of the remaining columns represent an examination activity that took place in the observation time-window, up to the programming date. For each examination event, the number ``1" indicates that the individual served on it. The column ``Total" indicates the total number of shifts the TA has taken so far. 
\begin{remark}\label{Rem Proctor Log}
	Some observations about this file are the following
	\begin{enumerate}[(i)]
		\item When the system is initiated for the first time this file has to contain all the columns holding the information of each TA (``Name", ``Cell", ``email", ``ID", ``Experience", ``Level") and the column ``Total" with value ``0" in all its rows. 
		
		\item Once the system is executed for the next round of tests, an updated file: \textbf{Updated\_Proctor\_Log.xls} will be generated automatically, see Section \ref{Sec Updated Proctor Log} and Table \ref{Tb Updated Proctor Log} for this file.
	\end{enumerate}
\end{remark}
\begin{table}[h!]
	\caption{Example of Proctor Log}\label{Tb Proctor Log}
	\def\arraystretch{1.2}
	\normalsize{
		\begin{center}
			\rowcolors{2}{gray!25}{white}
			\begin{tabular}{ c c c c c c c c}
				\hline
				\rowcolor{gray!50}
				Name &	Cell &	email &	ID	 & Experience &	
				Level  &	ODE, 04-II	
				& Total \\
				\hline
				TA 1 &	 C 1 &	1@m.co  & ID 1 & 1 & Undergraduate & & 0 \\ 
				TA 2 &	 C 2 & 2@m.co  & ID 2 & 2 & Undergraduate & 1 & 1 \\ 
				TA 3 &	 C 3 &	3@m.co	& ID 3	& 1	& Undergraduate	& 1 & 1 \\ 
				TA 4 &	C 4 & 4@m.co	& ID 4	& 1	& Undergraduate	& 1 & 1\\ 
				TA 5 &	C 5 &	5@m.co	& ID 5	& 2	& Undergraduate	& 1	& 1 \\ 
				TA 6 &	C 6 & 6@m.co	& ID 6 & 2	& Postgraduate & & 0 
				\\ 
				\hline
			\end{tabular}
		\end{center}
	}
\end{table}
%
%
%
%
%
%
\subsection{The Professors.xls file}\label{Sec Professors File}
%
%
The input file \textbf{Professors.xls}, is a spreadsheet with the structure of the minimal example presented in Table \ref{Tb Professors} below. Unlike the TAs, the lecturers are full-time employees, therefore they pose no time constraints when scheduled for proctoring duties. The column ``Coordinator" indicates if the Lecturer is the coordinator of the course and will be the general supervisor of the examination activity, therefore he/she will not be scheduled for proctoring testing rooms. The columns ``Subject", ``Subject\_2" indicate which subject are they lecturing in the academic period. 
\begin{remark}\label{Rem Professors File}
	Some observations are 
	\begin{enumerate}[(i)]
		\item Indicating that a faculty member is a course coordinator must be done with the word ``yes". If any other character or word is set (e.g. ``Yes", ``1", ``coordinator") the system \textbf{will not} understand the corresponding individual as coordinator and will include him/her in the proctoring duties as any other lecturer.
		
		\item By default, the system will assign an instructor to proctor only the examination activities of the subject he/she is lecturing. More specifically, in Table \ref{Tb Professors} Lec 3 will be assigned to proctor only examinations of DC. 
		
		\item Some instructors may have two or more service courses assigned, however, the system takes into account only the first subject for proctoring duties, e.g., Lec 4 will be assigned to proctor only examinations of ODE and not those of AL. Therefore, the remaining subjects may be omitted. 
		
		\item If it is the User Institution's policy is to make lecturers participate in the proctoring of every subject they teach, it suffices to create one more row for the second subject. For instance Lec 6 teaches two subjects, therefore two rows should be created for he/she, one having ``VAG" in the column ``Subject", the other having ``DC" in the same column.
		
		\item If the User Institution has the policy of having only the TAs proctoring the tests, the column ``Coordinator" should be filled with the word ``yes". This will suffice to exclude the instructors from the job assignment (as it does with the actual coordinator), but is important to stress that the Professors.xls file \textbf{must exist}, with the \textbf{columns} described above (even if it is empty), for RaPID\Lightning$ \Omega $ to work correctly.
	\end{enumerate}
\end{remark}  
\begin{table}[h!]
	\caption{Example of Professors file}\label{Tb Professors}
	\def\arraystretch{1.2}
	\normalsize{
		\begin{center}
			\rowcolors{2}{gray!25}{white}
			\begin{tabular}{ c c c c c c }
				\hline
				\rowcolor{gray!50}
				Name & Coordinator & Subject & Subject\_2 & Cell & email \\
				\hline
				Lec 1 &	yes	& VC &	&  C 100 & Lec1@m.co \\
				Lec 2 &		& NM &	& C 200 &	Lec2@m.co \\
				Lec 3 &		& DC &	&  C 300 & Lec3@m.co \\
				Lec 4 & 	& ODE & AL & C 400 &	Lec4@m.co \\
				Lec 5 &		& MD &	 & C 500 & Lec5@m.co \\
				Lec 6 &		& VAG &	DC & C 600	& Lec6@m.co
				\\ 
				\hline
			\end{tabular}
		\end{center}
	}
\end{table}
%
%
%
%
%
%
\section{The Output Files}\label{Sec Output Files}
%
%
In the present section we describe the output files that RaPID\Lightning$ \Omega $  produces and explain its contents. From the \textbf{user's} point of view, only two files are important: \textbf{Proposed\_Programming.xls} and \textbf{Updated\_Proctor\_Log.xls}, which will be explained first. However, a \textbf{developer} should understand the other two files which are produced as an intermediate step towards the final solution, from one system's module to the next; these are: \textbf{Scheduled\_Rooms.xls} and \textbf{Sheduled\_Crew.xls}
%
%
%
%
%
%
\subsection{The Proposed\_Programming.xls file}\label{Sec Proposed Programming}
%
%
The \textbf{Proposed\_Programming.xls} file is an excel book and it is the ultimate goal of the system. Here, there is a sheet for each course in the round; in our study case: DC, IC, VC, VAG, LA, ODE and NM. Each sheet has the structure of Table \ref{Tb Proposed Programming}. The fields are Room (code of the room), Envelope (or pack of tests), Observations, Capacity, Students, Slack, Test, Date, Proctors (number of assigned proctors), Name (name of the assigned proctors), Cell and email.
\begin{table}[p]
	\rotatebox{90}{
		\begin{minipage}{\textheight}
			\begin{center}
				\scriptsize{
					\caption{Example of Proposed Programming CV, 608 Students}\label{Tb Proposed Programming}
					\def\arraystretch{1.2}
					\rowcolors{2}{gray!25}{white}
					\begin{tabular}{ c c c c c c c c c c c c }
						\hline
						\rowcolor{gray!50}
						Room &	Envelope &	Observations &	Capacity &	Students &
						Slack &	Test &	Date &	Proctors &	Name &	Cell &	email
						\\
						\hline
						41-103 & 1 & Doorkeeper	& 106	& 106	& 0	& VC & Sa 14-16 06-IV	& 2	& TA 7	& C 7	& 7@m.co \\
						41-103 & 1 & Doorkeeper & 106	& 106	& 0	& VC & Sa 14-16 06-IV	& 2	& TA 41	& C 41 & 41@m.co \\
						46-307 & 2	& Card	& 80 &	80	& 0	& VC & Sa 14-16 06-IV & 2	& TA 20	& C 20 & 20@m.co \\
						46-307 & 2	& Card	& 80 & 80	& 0	& VC & Sa 14-16 06-IV & 2	& TA 48	& C 48 & 48@m.co \\
						46-209 & 3	& Card	& 50 &	50	& 0	& VC & Sa 14-16 06-IV & 1	& TA 65	& C 65 & 65@m.co \\
						16-224	& 4	& Card	& 72 & 57	& 15 &	VC & Sa 14-16 06-IV & 2 &	TA 28 &	C 28 & 28@m.co \\
						16-224	& 4	& Card	& 72 & 57	& 15 & VC & Sa 14-16 06-IV & 2	& TA 63	& C 63 & 63@m.co \\
						21-320	& 5	& Card	& 79 &	79	& 0	& VC & Sa 14-16 06-IV &	2	& TA 23	& C 23 & 23@m.co\\
						21-320	& 5	& Card	& 79 & 79	& 0	& VC & Sa 14-16 06-IV &	2	& TA 55	& C 55 &	55@m.co \\
						21-314	& 6	& Card	& 79 & 79	& 0	& VC & Sa 14-16 06-IV & 2 & TA 24 & C 24 & 24@m.co \\
						21-314	& 6	& Card	& 79 & 79	& 0	& VC & Sa 14-16 06-IV & 2 & TA 59 & C 59	& 59@m.co \\
						46-210	& 7	& Card	& 52 & 52	& 0	& VC & Sa 14-16 06-IV & 1 &	Lec 17 & C 1700 &	Lec17@m.co \\
						21-307	& 8	& Card	& 52 & 52	& 0	& VC & Sa 14-16 06-IV &	1 &	Lec 25	& C 2500	& Lec25@m.co \\
						16-223	& 9	& Card	& 63 & 53	& 10 & VC & Sa 14-16 06-IV & 2 & TA 4 &	C 4	& 4@m.co \\
						16-223	& 9	& Card	& 63 & 53	& 10 & VC & Sa 14-16 06-IV	& 2	& TA 67 & C 67 &	67@m.co\\
						Supervisor 1 &	na & na & na & na & na & VC & Sa 14-16 06-IV &	1 & TA 46 & C 46	& 46@m.co \\
						\hline
					\end{tabular}
				}
				\newline
				\newline
				\footnotesize{
					\caption{Example of Scheduled Rooms CV, 608 Students}\label{Tb Scheduled Rooms}
					\def\arraystretch{1.2}
					\rowcolors{2}{gray!25}{white}
					\begin{tabular}{ c c c c c c c c c }
						\hline
						\rowcolor{gray!50}
						Room &	Envelope &	Proctors &	Observations &	Capacity &	
						Students &	Slack &	Test &	Date \\
						\hline\\
						46-210	& 1	& 1	& Card	& 52 & 52 & 0 & VC & Sa 14-16 06-IV \\
						21-314	& 2	& 2	& Card	& 79 & 79 &	0 & VC & Sa 14-16	06-IV \\
						16-223	& 3	& 2	& Card	& 63 & 53 & 10 & VC & Sa 14-16	06-IV \\
						46-209	& 4	& 1	& Card	& 50 & 50 & 0 & VC & Sa 14-16	06-IV \\
						46-307	& 5	& 2	& Card	& 80 & 80 & 0 & VC & Sa 14-16	06-IV \\
						21-307	& 6	& 1	& Card	& 52 & 52 & 0 & VC & Sa 14-16	06-IV \\
						16-224	& 7	& 2	& Card	& 72 & 57 & 15 & VC & Sa 14-16	06-IV\\
						21-320	& 8	& 2	& Card	& 79 & 79 & 0 & VC & Sa 14-16	06-IV \\
						41-103	& 9	& 2	& Doorkeeper &	106 & 106 & 0 & VC & Sa 14-16 06-IV\\
						Supervisor 1	& na & 1 & na &	na & na & na & VC & Sa 14-16 06-IV \\
						\hline
					\end{tabular}
				}
			\end{center}
		\end{minipage}
	}
\end{table}
\begin{remark}\label{Rem Proposed Programming}
	Two observations are in order 
	\begin{enumerate}[(i)]
		\item The number of assigned proctors to a room depends on the number of students. The default value is one proctor per 54 students, which can be changed by the user at the time of executing RaPID\Lightning$ \Omega $, see Section \ref{Sec Rooms Choice}, Definition \ref{Def Capacity and Weight} to change this value.
		
		\item If two (or more) proctors are assigned to one room, two (or more) rows will be equal except for the name of the proctor (e.g., the first and second row in Table \ref{Tb Proposed Programming}). As a consequence of these repetitions, the sum of the column ``Students" in this file, will not yield the number of students taking the test, as it happens in the example at hand.
	\end{enumerate}
\end{remark} 
%
%
\subsection{The New\_Proctor\_Log.xls file}\label{Sec Updated Proctor Log}
%
%
Once the system is executed for a next round of tests, an updated file \newline
\textbf{New\_Proctor\_Log.xls}, will be generated automatically. This will have the previous service record and it will paste it on the left: one column per examination activity. In each column a ``1" will be written if the individual was selected to serve in the corresponding activity and also the ``Total" column will be updated. 

Consider the minimal example presented in Table \ref{Tb Updated Proctor Log}. Here, it is understood that the proctor log file is that of Table \ref{Tb Proctor Log} (only one examination ODE on February the 4th took place before) and the system is programming only one examination in the next round, which is AVG on March the 4th. The updated log of Table \ref{Tb Updated Proctor Log} writes the proctoring duties for all the employees and an updated ``Total" of service.
\begin{table}[h!]
	\caption{Example of Updated Proctor Log}\label{Tb Updated Proctor Log}
	\def\arraystretch{1.2}
	\normalsize{
		\begin{center}
			\rowcolors{2}{gray!25}{white}
			\begin{tabular}{ c c c c c c c c c}
				\hline
				\rowcolor{gray!50}
				Name &	Cell &	email &	ID	 & Experience &	
				Level  &	ODE, 04-II	& AVG, 04-III
				& Total \\
				\hline
				TA 1 &	C 1 &	1@m.co  & ID 1 & 1 & Undergraduate & & 1 & 1 \\
				TA 2 &	C 2 & 2@m.co  & ID 2 & 2 & Undergraduate & 1 & &	1 \\
				TA 3 &	C 3 &	3@m.co	& ID 3	& 1	& Undergraduate	& 1 & 1 & 2 \\		
				TA 4 &	C 4 & 4@m.co	& ID 4	& 1	& Undergraduate	& 1 &	& 1\\
				TA 5 &	C 5 &	5@m.co	& ID 5	& 2	& Undergraduate	& 1	& 1 & 2 \\
				TA 6 &	C 6 & 6@m.co	& ID 6 & 2	& Postgraduate & & & 0
				\\ 
				\hline
			\end{tabular}
		\end{center}
	}
\end{table}
\begin{remark}\label{Rem Updated Proctor Log}
	The following must be observed
	\begin{enumerate}[(i)]
		\item Typically, the updated version of the proctors' log should increase its columns in more than one. In our study case, the School of Mathematics from Universidad Nacional de Colombia, Sede Medell\'in, the update increases seven columns each round, because that is the number of massive courses that RaPID\Lightning$ \Omega $ manages for the User Institution. 
		
		\item The more courses that are programmed in one round, the more chances for optimization instances. The examination rounds need not be equal as some courses may take two midterms, while others take three.
		
		\item The file New\_Proctor\_Log.xls is created independently from Proctor\_Log.xls, instead of simply overwriting it, for security reasons. It will also be useful for later manual corrections, for instance, some employees selected to proctor may have a license (medical or personal). Due to the random nature of these exceptions, it will be wiser to handle them manually, by a human supervisor, than trying to incorporate them in the system.
		
		\item Once the round of examinations is over, for the next round of tests, the Proctor\_Log.xls file must be replaced by the New\_Proctor\_Log.xls file.
	\end{enumerate}
\end{remark}
%
%
%
%
\subsection{The Scheduled\_Rooms.xls file}\label{Sec Scheduled Rooms}
%
%
The first module of the system \textbf{Room\_Decision.py} processes the file Available\_Rooms.xls (see Table \ref{Tb Available Rooms}) and chooses rooms in order to minimize the number of necessary proctors (see Section \ref{Sec Room Decision} for the exposition of its algorithm). Its results are summarized in the file \textbf{Scheduled\_Rooms.xls}, which is an excel book having one sheet for each programmed test in the round. Each sheet has the structure of Table \ref{Tb Scheduled Rooms}. As it can be observed it is very similar to the Proposed\_Programming.xls file but without the assigned proctors. This is because the Scheduled\_Rooms.xls file is an intermediate step, when only the rooms have been decided. Hence, it is an internal file, moreover it is part of the input data for the module \textbf{Personnel\_Decision.py}, which selects a team of proctors (see Section \ref{Sec Personnel Decision} for the presentation of its algorithm).
%
%
%
%
\subsection{The Scheduled\_Crew.xls file}\label{Sec Scheduled Crew}
%
%
The second module of the system \textbf{Personnel\_Decision.py} processes the input files Personnel\_Time.xls, Proctor\_Log.xls, Professors.xls together with the internal file Scheduled\_Rooms.xls to make job assignment decisions based on fairness. More specifically, the system tries to keep as close as possible, the number of shifts that each TA has in his/her service record. Its decisions are summarized in the file \textbf{Scheduled\_Crew.xls}, a minimal (and incomplete) example can be observed in Table \ref{Tb Scheduled Crew}. 
At this stage, each examination has a crew of proctors which can be gathered/identified by the label \textbf{Test}, indicating which test is going to be proctored by each individual in the file. This file, together with Scheduled\_Rooms.xls are the input data for the third module, \textbf{Crew\_Organization.py} to decide how to organize the previously selected team according to proctoring quality control (see Section \ref{Sec Crew Organization} for the explanation of the algorithm).
\begin{table}[h!]
	\caption{Example of Scheduled Crew file}\label{Tb Scheduled Crew}
	\def\arraystretch{1.2}
	\footnotesize{
		\begin{center}
			\rowcolors{2}{gray!25}{white}
			\begin{tabular}{ c c c c c c }
				\hline
				\rowcolor{gray!50}
				Cell &	Experience	& Level	& Name	& Test & email \\
				\hline
				C 12 & 3 &	Undergraduate &	TA 12	& LA &	12@m.co \\
				C 11 &	1 &	Undergraduate & TA 11	& ODE &	11@m.co \\
				C 10 &	2 &	Undergraduate &	TA 10 &	DC	& 10@m.co \\
				C 10 &	2 &	Undergraduate &	TA 10 &	IC	& 10@m.co \\
				C 800 &	10	& PhD & Lec 8 &	LA & Lec8@m.co \\
				C 700 & 10	& PhD &	Lec 7 &	DC & Lec7@m.co \\
				\hline
			\end{tabular}
		\end{center}
	}
\end{table}
%
%
%
%
\section{Execution and Problems Description}\label{Sec Execution and Problems}
%
%
%
%
%
%
%
\subsection{Execution}\label{Sec Execution}
%
%
In order to run the program notice that the downloaded folder will contain all the necessary files for a \textbf{full example}, these are:
\begin{enumerate}[(i)]
	\item The Python scripts: Room\_Decision.py, Personnel\_Decision.py, Crew\_Organizacion.py, \newline
	Attendance\_Lists.py and RaPID-Omega.py. 
	\item The input files: Available\_Rooms.xls, Room\_Data.xls, Personnel\_Time.xls, Proctor\_Log.xls and Professors.xls.  
	
\end{enumerate}
The program \textbf{runs} from \textbf{command line} on a computer having installed Python 3.4.4 or later using the instruction:
\begin{enumerate}[(i)]
	\item Windows version
	\begin{center}
		\textbf{python.exe RaPID-Omega.py -t 54} (54 states the student-proctor rate).
	\end{center}
	\item Linux-Mac version
	\begin{center}
		\textbf{python3 RaPID-Omega.py -t 54} (54 states the student-proctor rate).
	\end{center}
\end{enumerate}
Once the program is executed the following files are generated: Scheduled\_Rooms.xls, Scheduled\_Crew.xls, Proctor\_Log.xls, New\_Proctor\_Log.xls and Proposed\_Programming.xls
%
%
%
%
%
%
%
%
\subsection{Problems Description and Hints}
%
%

\begin{enumerate}[I.]
	\item \textbf{Troubleshooting}. The system RaPID\Lightning$ \Omega $ may present the following problems when starting it
	\begin{enumerate}[(i)]
		\item The \textbf{python version} installed in your computer is not 3.4. In such case, you need to modify manually the first line of the four codes: Room\_Decision.py, Personnel\_Decision.py, Crew\_Organization and RaPID-Omega.py
		
		\item The \textbf{necessary libraries} are not installed. Verify that pandas, scipy and numpy are installed.
		
		\item RaPID\Lightning$ \Omega $ can be run directly from \textbf{python editors} like IDLE or Spider inside Anaconda. Before trying, make sure it is possible to do it without having to do a conversion format work (this is the case for Jupyter inside Anaconda).
		
		\item \textbf{Hint.} It is highly recommendable that you begin by making run the system in the Full Example folder (includes full data sets), so you can check RaPID\Lightning$ \Omega $ works properly on your computer. 
		
		\item \textbf{Installation.} If you require assistance installing on your computer the necessary software to run RaPID\Lightning$ \Omega $ check the following websites: Python \cite{Python}, Numpy and Scipy \cite{NumpyScipy}, Pandas \cite{Pandas}. A comfortable installer of python libraries is PIP, see \cite{PIP}
	\end{enumerate}

	\item \textbf{Data basis.}
	The RaPID\Lightning$ \Omega $ solution depends critically on the input data basis, therefore, these files are the main source of potential problems which we describe below. 
	\begin{enumerate}[(i)]
		\item \textbf{Consistency between data basis.} A common source of errors is the consistency of the labels used across the several data sets. For instance if the time windows used for the tests do not agree in format with those of the personnel time availability, or that the name of an employee in the data base Personnel\_Time.xls, does not agree with his/her name in the file Registro\_Vigilancias.xls. See the section \ref{Sec Input Datasets} for more details and examples. 
		
		\item \textbf{Repeated data.} It is possible that the file Personnel\_Time.xls and/or Registro\_Vigilancias.xls present two or more repeated names. This, due to human error or mere coincidence (two employees with the same name). This will cause errors in the execution of the program. First, make sure that there are no mistakes in the names written in both data basis. If by coincidence two or more employees have the same name \textbf{differentiate them artificially}, e.g., John I, John II, etc.
		
		\item \textbf{Hint.} When building your own data sets, start from those in this folder, modifying them one by one. This way, you can make partial checks to see if your data sets are constructed correctly. If, the format/structure of the data sets of the Institution/Client at hand is different from those defined in RaPID\Lightning$ \Omega $, it is safer (and presumably easier) to develop an independent module migrating the Institution's formats to the RaPID\Lightning$ \Omega $ formats.
	\end{enumerate}
	
\end{enumerate}
%
%
%
%
%
%
\section{Conclusions}\label{Sec Conclusions}
%
%
The present work delivers the following conclusions
\begin{enumerate}[(i)]
	\item The problem of programming classrooms with proctoring personnel for large scale tests has been analyzed, dividing the process in three sub-problems: the choice of rooms, the choice of proctoring crews and the crew organization.
	
	\item Each of the three problems is mathematically modeled with an integer programming problem and solved independently. Its corresponding digital implementation constitutes a module in the RaPID\Lightning$ \Omega $ system. In addition, the nature of the problem suggests which input data sets to define, their structure, as well as the format of the output given by the system.
	
	\item The model is successful, practical and flexible. Its good mathematical results, robustness and its correct implementation are tested by years of service at the School of Mathematics at the National University of Colombia.
	
	\item The computational complexity of the model is the sum of the complexities of each module. The first module solves the problem with dynamic programming which is solved in pseudo-polynomial time, as stated in Theorem \ref{Thm Dynamic Programming Complexity} and a Greedy Algorithm which also runs in polynomial time, see Theorem \ref{Thm Computational Complexity Greedy Rooms}. The second module uses the Simplex method, the particular usage or RaPID\Lightning$ \Omega $ (as well as literature on the field) has shown a polynomial running time average behavior, see Remark \ref{Rem Computational Complexity Personnel Decision}. The third module uses a mere Greedy Algorithm which runs in polynomial time as discussed in Theorem \ref{Thm Computational Complexity Greedy Crew}. It follows that the whole system runs in polynomial time with respect to the size of the data that is requested to handle.
	
	\item The modular approach to the problem makes it easy for a programmer to introduce modifications according to the needs of the Institution at hand. Its spirit is to stay as an open-box system, furnishing  the algorithmic core treatment of the problem.   
	
	\item The system RaPID\Lightning$ \Omega $ is coded in python 3.4, therefore is completely free as it uses only free tools and libraries, even its input and output excel books and sheets can be handled with their corresponding open versions (Apache, Libre, etc.), because the system does not depend on excel for its computations; it only uses as input and output format.
\end{enumerate}  

%
%
%
%
%
%
\section*{Acknowledgements}
The Author wishes to thank Universidad Nacional de Colombia, Sede Medell\'in for supporting the production of this work through the project Hermes 45713. 

%


\end{document}